\title{Regular Model Checking for Systems with Effectively Regular Reachability Relation} 
\author{Javier Esparza}{Technical University of Munich, Germany}{}{https://orcid.org/0000-0001-9862-4919}{}
\author{Valentin Krasotin}{Technical University of Munich, Germany}{}{https://orcid.org/0009-0002-2129-2754}{}
\authorrunning{J. Esparza and V. Krasotin} 
\keywords{Regular model checking, abstraction, inductive invariants} 
\newcommand{\prob}{\mathbb{P}}
\newcommand{\Id}{\mathsf{Id}}
\newcolumntype{R}[1]{>{\RaggedLeft\arraybackslash}p{#1}}
\tikzset{>={Latex[length=2mm,width=2mm]}}
\newcommand{\system}{\mathcal{S}}
\newcommand{\initial}{\mathcal{I}}
\newcommand{\interpretation}{\mathcal{V}}
\newcommand{\configurations}{\mathcal{C}}
\newcommand{\trafun}{\Delta}
\newcommand{\reach}{\textit{Reach}}
\newcommand{\potreach}{\textit{PReach}}
\newcommand{\N}{\mathbb{N}}
\newcommand{\Language}[1]{\lang{L}({#1})}
\newcommand{\Relation}[1]{\lang{R}({#1})}
\newcommand{\vtuple}[2]{\big[{{#1} \atop {#2}}\big]}
\definecolor{nicebg}{HTML}{f6f0e4}
\definecolor{nicered}{HTML}{7f0a13}
\definecolor{nicebgred}{HTML}{f2e7e8}
\definecolor{niceblue}{HTML}{104354}
\definecolor{nicebgblue}{HTML}{e8edee}
\definecolor{nicegreen}{HTML}{217516}
\definecolor{nicebggreen}{HTML}{e9f1e8}
\definecolor{nicepurple}{HTML}{884bab}
\definecolor{nicebgpurple}{HTML}{f3edf7}
\definecolor{niceorange}{HTML}{d27c11}
\definecolor{nicebgorange}{HTML}{fbf2e8}
\definecolor{nicepink}{HTML}{e95f9f}
\definecolor{nicebgpink}{HTML}{fdeff6}
\definecolor{niceredlight}{HTML}{c9888d}
\definecolor{nicebluelight}{HTML}{78a4b8}
\definecolor{nicegreenlight}{HTML}{76de68}
\definecolor{nicepurplelight}{HTML}{bc87db}
\definecolor{niceredbright}{HTML}{bd0310}
\definecolor{nicebgredbright}{HTML}{f9e6e8}
\definecolor{nicebluebright}{HTML}{197b9b}
\definecolor{nicebgbluebright}{HTML}{e8f2f5}
\newcommand\xleftrightarrow[2][]{\ext@arrow 0099{\longleftrightarrowfill@}{#1}{#2}}
\def\longleftrightarrowfill@{\arrowfill@\leftarrow\relbar\rightarrow}
\begin{document}

\maketitle

\begin{abstract}
Regular model checking is a well-established technique for the verification of \emph{regular transition systems} (RTS): transition systems whose initial configurations and transition relation can be effectively encoded as regular languages. In 2008, To and Libkin studied RTSs in which the reachability relation (the reflexive and transitive closure of the transition relation) is also effectively regular, and showed that the recurrent reachability problem (whether a regular set $L$ of configurations is reached infinitely often) is polynomial in the size of RTS and the transducer for the reachability relation. We extend the work of To and Libkin by studying the decidability and complexity of verifying \emph{almost-sure} reachability and recurrent reachability---that is, whether $L$ is reachable or recurrently reachable w.p.~1.  We then apply our results to the more common case in which only a regular overapproximation of the reachability relation is available. In particular, we extend recent complexity results on verifying safety using \emph{regular abstraction frameworks}---a  technique recently introduced by Czerner, the authors, and Welzel-Mohr---to liveness and almost-sure properties.
\end{abstract}

\section{Introduction}

Regular model checking is a well-established technique for the verification of infinite-state systems whose configurations can be represented as finite words over a suitable alphabet. It applies to \emph{regular transition systems} (RTS): systems whose set of initial configurations and transition relation are both regular, and presented as a finite automaton and a finite-state transducer, respectively~\cite{BouajjaniJNT00,AbdullaJNd02,AbdullaJNS04,Abdulla12,Abdulla21}. The main application of regular model checking is the verification of parameterized systems in which an arbitrarily long array or ring of finite-state processes interact~\cite{AbdullaST18}. Examples of these systems include mutual exclusion algorithms, cache coherence protocols, communication protocols, consensus algorithms, and others.


RTSs are very general; in particular, it is easy to encode Turing machines as regular transition systems, which makes all interesting analysis problems for RTSs undecidable. In~\cite{ToL08,ToL10}, To and Libkin observed that in some classes of RTSs, like pushdown systems and ground-term-rewriting systems, the reachability relation (i.e., the reflexive and transitive closure of the transition relation) is regular and one can effectively compute a transducer recognizing it. They showed that in this case the \emph{reachability} and \emph{recurrent reachability} problems become decidable and solvable in polynomial time in the size of the transducer for the reachability relation. These problems ask, given a RTS $\system$ and a regular set $L$ of configurations, whether some run of $\system$---that is, some run starting at some initial configuration of $\system$---visits $L$ at least once (reachability) or infinitely often (recurrent reachability). In the notation of CTL$^*$, they correspond to deciding the formulas \textbf{EF}\,$L$ and \textbf{EGF}\,$L$, respectively. Many other problems can be reduced to reachability or recurrent reachability

In this paper we extend the work of To and Libkin in two different directions. Both of them are motivated by our interest in applying regular model checking to the verification of liveness properties of \emph{randomized} distributed systems with an arbitrary number of processes, like many algorithms for distributed consensus or self-stabilization.

\subparagraph{Verification of almost-sure properties.} Proving liveness properties for randomized distributed systems amounts to showing that, under a class of adversarial stochastic schedulers selecting the process making the next move, something happens \emph{almost surely} (a.s.), or, in other words, that for every scheduler in the class and every initial configuration, the runs of the system starting at that configuration and satisfying a given property have probability one~\cite{LengalLMR17,LinR16}. This raises the question whether a.s.\ reachability and a.s.\ recurrent reachability are decidable when the reachability relation is effectively regular assuming, as in~\cite{ToL10},  that a transducer for the reachability relation is part of the input. We also study whether reachability and recurrent reachability holds for all possible schedulers, that is, whether $L$ is visited at least once or infinitely often by \emph{every} run starting at any initial configuration. These properties---which we call sure reachability and sure recurrent reachability and are expressed by the formulas \textbf{AF}\,$L$ and \textbf{AGF}\,$L$ in CTL$^*$---hold for a fair number of probabilistic systems, as observed in~\cite{LinR16}. We also study the complexity of some related properties, like a.s.\ termination or deadlock-freedom, which are important in applications.

Most algorithms for distributed consensus or self-stabilization are designed to work for an arbitrary but fixed number of processes, without dynamic process creation or deletion. In these systems the successors of a configuration with $k$ processes also have $k$ processes, and so both the configuration and its successor are encoded by words of length $k$. For this reason, on top of the complexity for general transducers we also investigate the complexity for the length-preserving case.

A summary of our results is shown in Tables~\ref{table:results1} and~\ref{table:results2} (recall that $\Pi_1^0$ are the co-recursively enumerable languages and $\Pi_1^1$ the corresponding level of the analytic hierarchy~\cite{odifreddi1992classical}). In all cases, the problem consists of deciding,  given an RTS, a nondeterministic transducer recognizing the reachability relation, and an NFA recognizing the set $L$ of configurations, whether the corresponding property holds. Our main results are:
\begin{itemize}
\item There is a big gap in the complexities of \textbf{EF}\,$L$ and \textbf{AF}\,$L$ (the first can be solved in polynomial time, while the second is undecidable) even in the length-preserving case.
\item A.s.\ recurrent reachability is easier to check than sure reachability in the length-preserving case. This is surprising, because, as mentioned above, sure reachability is often used as an easier-to-check approximation to a.s.\ recurrent reachability.
\item For sure properties, the general case is harder than the length-preserving case. Indeed, since they are $\Pi_1^1$- and $\Pi_1^0$-complete, respectively, there is no recursive reduction from the first to the second. This is contrary to the case of reachability and recurrent reachability, where the general case can be reduced to the length-preserving case.
\end{itemize}

\begin{table}[ht]
\begin{tblr}{l|l|c|c}
property  & CTL$^*$ & length-preserving & general\\
\hline
reachability & \textbf{EF}\,$L$   & \NL-complete & \NL-complete \\
recurrent reachability & \textbf{EGF}\,$L$ & \NL-complete (see~\cite{BergstrasserGLZ22}) & \NL-complete (see~\cite{BergstrasserGLZ22}) \\
sure reachability & \textbf{AF}\,$L$   & $\Pi_1^0$-complete & $\Pi_1^1$-complete \\
sure recurrent reachability & \textbf{AGF}\,$L$ & $\Pi_1^0$-complete & $\Pi_1^1$-complete \\
sure termination & \textbf{AF}\,$T$ & \NL-complete (see~\cite{BergstrasserGLZ22}) & \NL-complete (see~\cite{BergstrasserGLZ22}) \\
deadlock-freedom & \textbf{AG}\,$\overline T$ & \PSPACE-complete & \PSPACE-complete \\
\hline
\end{tblr}
\bigskip
\caption{Complexity of several decision problems for RTSs given a transducer for the reachability relation.
$T$ denotes the set of configurations without a successor. Note that sure termination is the special case of recurrent reachability where $L = \configurations$.}
\label{table:results1}
\end{table}

\begin{table}[ht]
\begin{tblr}{l|l|c|c}
 property & LTL & length-preserving & general\\
\hline
a.s.\ reachability & $\mathcal{P}(\textbf{F}\, L) = 1$ & $\Pi_1^0$-complete & undecidable\\
a.s.\ recurrent reachability & $\mathcal{P}(\textbf{GF}\, L) = 1$ & \PSPACE-complete & undecidable\\
a.s.\ termination & $\mathcal{P}(\textbf{F}\, T) = 1$ & \EXPSPACE-complete & undecidable\\
a.s.\ deadlock-freedom & $\mathcal{P}(\textbf{G}\, \overline T) = 1$ & \PSPACE-complete & \PSPACE-complete \\
\hline
\end{tblr}
\bigskip
\caption{Complexity of several decision problems for probabilistic RTSs given a transducer for the reachability relation.
$T$ denotes the set of configurations without a successor. Note that a.s.\ deadlock-freedom is equivalent to deadlock-freedom.}
\label{table:results2}
\end{table}

\subparagraph{Regular overapproximations of the reachability relation.} The reachability relation is regular for some classes of RTSs, but not for many others. In particular, this is not the case for most models of concurrent and distributed computing, like Vector Addition Systems (VAS, aka Petri nets) and many of their extensions~\cite{murata1989petri,BlondinR21}, lossy channel systems~\cite{AbdullaJ93,IyerN97}, broadcast protocols~\cite{EmersonN96,EsparzaFM99} or population protocols~\cite{AngluinADFP06}. However, several techniques exist for computing a regular \emph{overapproximation} of the reachability relation~\cite{AbdullaHH16,HongL24,CzernerEKW24}. In this paper, we are interested in the regular abstraction frameworks of~\cite{CzernerEKW24}. In this approach, transducers are used not only to model the transition relation of the system, but also, in the terminology of abstract interpretation, to model abstract domains~\cite{cousot2021principles}. After the user fixes an abstract domain by choosing a transducer, the system automatically computes another transducer recognizing the \emph{potential reachability} relation, the best overapproximation in the abstract domain of the reachability relation---in a certain technical sense\footnote{The precise notion of best approximation is not the same as in standard abstract interpretation.}. It is shown in~\cite{CzernerEKW24} that, while the safety problem (whether, given an RTS and a regular set of unsafe configurations, some reachable configuration is unsafe) is undecidable, the abstract safety problem (whether, given additionally a transducer for an abstract domain, some \emph{potentially} reachable configuration is unsafe) is \EXPSPACE-complete.

We study whether the decision algorithms for sure and almost-sure properties become semi-decision algorithms when one uses a regular abstraction, i.e., whether they become algorithms that always terminate and answer ``yes'' or ``don't know'' (or ``no'' and ``don't know''). We first show that this is the case for recurrent reachability and prove that  deciding whether some \emph{potential run} of the RTS visits $L$ infinitely often, is also \EXPSPACE-complete. We then study almost-sure recurrent reachability. It is easy to see that in general one does not obtain a semi-decision algorithm. However, our last result shows that one does for systems in which the set of predecessors of the set $L$ of goal configurations is effectively computable, a condition satisfied by all well-structured transition systems~\cite{AbdullaCJT96,FinkelS01} under weak conditions satisfied all the models of~\cite{murata1989petri,BlondinR21,AbdullaJ93,IyerN97,EmersonN96,EsparzaFM99,AngluinADFP06}. In particular, we prove that the abstract version of a.s.\ recurrent reachability is \EXPSPACE-complete for population protocols, whereas the problem itself is equivalent to the reachability problem for Petri nets under elementary reductions~\cite{EsparzaGLM16}, and therefore Ackermann-complete~\cite{CzerwinskiLLLM21,CzerwinskiO21,Leroux21}.

\section{Preliminaries}
\subparagraph*{Relations.} Let $R \subseteq X \times Y$ be a relation. The \emph{complement} of $R$ is the relation $\overline{R} := \{(x,y) \in X \times Y \mid (u,w) \notin R\}$. The \emph{inverse} of $R$ is the relation $R^{-1} := \{(y,x) \in Y \times X \mid (x,y) \in R\}$. The \emph{projections} of $R $ onto its first and second components are the sets $R|_1 := \{x \in X\mid \exists y \in Y \colon (x,y) \in R\}$ and $R|_2 := \{y \in Y\mid \exists x \in X \colon (x,y) \in R\}$. The \emph{join} of two relations $R \subseteq X \times Y$ and $S \subseteq Y \times Z$ is the relation $R \circ S := \{(x,z) \in X \times Z \mid \exists y \in Y \colon (x,y) \in R, (y,z) \in S\}$. The \emph{post-image} of a set $X' \subseteq X$ under a relation $R \subseteq X \times Y$, denoted $X' \circ R$ or $R(X')$, is the set $\{y \in Y \mid \exists x \in X' \colon (x,y) \in R\}$; the \emph{pre-image}, denoted $R \circ Y$ or $R^{-1}(Y)$, is defined analogously. Throughout this paper, we only consider relations where $X =\Sigma^*$ and $Y = \Gamma^*$ for some alphabets $\Sigma$, $\Gamma$. We just call them relations. A relation $R \subseteq \Sigma^* \times \Gamma^*$ is \emph{length-preserving} if $(u, w) \in R$ implies $|u|=|w|$.

\subparagraph*{Automata.} Let $\Sigma$ be an alphabet. A \emph{nondeterministic finite automaton (NFA)} over $\Sigma$ is a tuple $A = (Q,\Sigma,\delta,Q_0,F)$ where $Q$ is a finite set of \emph{states}, $\delta:Q \times \Sigma \to \mathcal{P}(Q)$ is the \emph{transition function}, $Q_0 \subseteq Q$ is the set of \emph{initial states}, and $F \subseteq Q$ is the set of \emph{final states}. A \emph{run} of $A$ on a word $w = w_1 \cdots w_l \in \Sigma^l$ is a sequence $q_0q_1 \cdots q_l$ of states where $q_0 \in Q_0$ and $\forall i \in [l]: q_i \in \delta(q_{i-1},w_i)$. A run on $w$ is \emph{accepting} if $q_l \in F$, and $A$ \emph{accepts} $w$ if there exists an accepting run of $A$ on $w$. The language \emph{recognised} by $A$, denoted $\Language{A}$, is the set of words accepted by $A$. If $|Q_0| = 1$ and $|\delta(q,a)| = 1$ for every $q\in Q, a \in \Sigma$, then $A$ is a \emph{deterministic finite automaton (DFA)}.

\subparagraph*{Convolutions and transducers.} Let $\Sigma$, $\Gamma$ be alphabets, let $\# \notin \Sigma \cup \Gamma$ be a padding symbol, and let $\Sigma_\#:= \Sigma \cup \{\#\}$ and $\Gamma_\#:= \Gamma \cup \{\#\}$. The \emph{convolution} of two words $u = a_1 \ldots a_k \in \Sigma^*$ and $w = b_1 \ldots b_l\in\Gamma^*$, denoted $\vtuple{u}{w}$,  is the word over the alphabet $\Sigma_\#  \times \Gamma_\#$ defined as follows. Intuitively, $\big[{u \atop w}\big]$ is the result of putting $u$ on top of $w$, aligned left, and padding the shorter of $u$ and $w$ with $\#$. Formally,  if $k \leq l$, then $\big[{u \atop w}\big] = \big[{a_1 \atop b_1}\big] \cdots \big[{a_k \atop b_k}\big]\big[{\# \atop b_{k+1}}\big] \cdots
\big[{\# \atop b_{l}}\big]$, and otherwise $\big[{u \atop w}\big] = \big[{a_1 \atop b_1}\big] \cdots \big[{a_l \atop b_l}\big]\big[{a_{l+1} \atop \#}\big] \cdots
\big[{a_k \atop \#}\big]$.

A \emph{transducer} over $\Sigma\times\Gamma$ is an NFA over $\Sigma_\#  \times \Gamma_\#$. The binary relation recognised by a transducer $T$ over $\Sigma\times\Gamma$, denoted $\Relation{T}$, is the set of pairs $(u, w) \in \Sigma^* \times \Gamma^*$ such that $T$ accepts $\big[{u \atop w}\big]$. 
A transducer is \emph{deterministic} if it is a DFA. A relation is \emph{regular} if it is recognised by some transducer. A transducer is \emph{length-preserving} if it recognises a length-preserving relation.

\subparagraph*{Complexity of operations on automata and transducers.} Given NFAs $A_1$, $A_2$ over $\Sigma$ with $n_1$ and $n_2$ states,  DFAs $B_1$, $B_2$ over $\Sigma$ with $m_1$ and $m_2$ states, and transducers $T_1$ over $\Sigma \times \Gamma$ and  $T_2$ over $\Gamma \times \Sigma$ with $l_1$ and $l_2$ states, the following facts are well known (see e.g.\ chapters 3 and 5 of~\cite{EB23}):
\begin{itemize}
\item there exist NFAs for $\Language{A_1} \cup \Language{A_2}$,  $\Language{A_1} \cap \Language{A_2}$, and $\overline{\Language{A_1}}$ with at most $n_1 + n_2$, $n_1 n_2$, and $2^{n_1}$ states, respectively; 
\item there exist DFAs for $\Language{B_1} \cup \Language{B_2}$,  $\Language{B_1} \cap \Language{B_2}$, and $\overline{\Language{B_1}}$ with at most $m_1 m_2$, $m_1 m_2$, and $m_1$ states, respectively;
\item there exist NFAs for $\Relation{T_1}|_1$ and $\Relation{T_1}|_2$ and a transducer for $\Relation{T_1}^{-1}$ with at most $l_1$ states;
\item there exists a transducer for $\Relation{T_1} \circ \Relation{T_2}$ with at most $l_1 l_2$ states; and
\item there exist NFAs for $\Language{A_1} \circ \Relation{T_1}$ and $\Relation{T_1} \circ \Language{A_2}$ with at most $n_1 l_1$ and $l_1 n_2$ states, respectively.
\end{itemize}

\subparagraph*{Turing machines.} We fix the definition of Turing machine used in the paper.

A \emph{Turing machine (TM)} is a tuple $M = (Q,\Sigma,\Gamma,\square,\delta,q_0,q_f)$ where
$Q$ is the set of states,
$\Sigma$ is the input alphabet,
$\Gamma \supsetneq \Sigma$ is the tape alphabet,
$\square \in \Gamma \setminus \Sigma$ is the blank symbol,
$q_0 \in Q$ is the initial state,
$q_f \in Q$ is the (only) final state, and
$\delta \colon (Q\setminus\{q_f\}) \times \Gamma \to Q \times \Gamma \times \{L,R\}$ is the transition function.
A \emph{nondeterministic Turing machine} is defined the same way as a Turing machine, with the difference that $\delta$ is now a function from $(Q\setminus\{q_f\}) \times \Gamma$ to $2^{Q \times \Gamma \times \{L,R\}}$.
A \emph{configuration} of a (nondeterministic) TM $M$ is a triple $(w_l, q, w_r)$ where $w_l$ is the tape to the left of the head of $M$, $q$ is the current state of $M$, and $w_r$ is the tape to the right of the head of $M$; the first symbol of $w_r$ is the symbol currently being read. The successor(s) of a configuration are defined as usual. A configuration $(w_l, q, w_r)$ is \emph{accepting} if $q = q_f$. A \emph{run} of a (nondeterministic) TM $M$ on an input $w$ is either a sequence $(c_i)_{i \in \N_0}$ of configurations of $M$ where $c_0 = (\varepsilon, q_0, w)$ and $c_i$ is a successor of $c_{i-1}$, or a tuple $(c_0,...,c_n)$ where $c_0 = (\varepsilon, q_0, w)$, $c_i$ is a successor of $c_{i-1}$, and $c_n$ is accepting. In the latter case, the run is called \emph{accepting}. $M$ \emph{accepts} the input $w$ if there exists an accepting run of $M$ on $w$.
\begin{remark}
Note that Turing machines are (unless specified otherwise) deterministic, have no rejecting state, and the transition function is total, i.e.\ every non-accepting configuration has a successor. In particular, a Turing machine halts iff it accepts.
\end{remark}

\subparagraph*{The arithmetical and analytical hierarchies.}
We briefly recall the definition of the first levels of the arithmetical and analytical hierarchies (see e.g.~\cite{odifreddi1992classical}, part IV). $\Sigma_1^0$ is the set of all semi-decidable problems. $\Sigma_1^1$ is the set of sets of the form $\{n \in \N \mid \exists \varphi_1,...,\varphi_k: f(n,\varphi_1,...,\varphi_n) \}$ where the $\varphi_i$ range over functions from $\N$ to $\N$ and $f$ is an arithmetic formula with arbitrary quantification over natural numbers. $\Pi_1^0$ and $\Pi_1^1$ are the sets of problems whose complement is in $\Sigma_1^0$ and $\Sigma_1^1$, respectively. $\Sigma_1^1$-hard and $\Pi_1^1$-hard problems are sometimes called \emph{highly undecidable}.

\subsection{Regular transition systems}

We recall standard notions about regular transition systems and fix some notations. Then we recall how to use regular transition systems to simulate Turing machines.

A \emph{transition system} is a pair $\system=(\configurations,\trafun)$ where $\configurations$ is the set of all possible \emph{configurations} of the system, and $\trafun \subseteq \configurations \times \configurations$ is a \emph{transition relation}. The \emph{reachability relation} $\reach$ is the reflexive and transitive closure of $\trafun$. Observe that, by our definition of post-set, $\trafun(C)$ and $\reach(C)$ are the sets of configurations reachable in one step and in arbitrarily many steps from $C$, respectively.

Regular transition systems are transition systems where configurations are represented by words and the transition relation is regular. We also define regular transition systems to contain a set of initial configurations which all runs start from.

\begin{definition}
A \emph{regular transition system (RTS)} over an alphabet $\Sigma$ is a pair $\system = (\initial, \trafun)$ where $\Sigma$ is an alphabet, the set of configurations is $\configurations = \Sigma^*$, $\initial \subseteq \configurations$ is a regular set of \emph{initial configurations} and $\trafun \subseteq \configurations \times \configurations$ is a regular transition relation. $\system$ is called \emph{length-preserving} if $\trafun$ is length-preserving. A configuration $c \in \configurations$ is called \emph{terminating} if it has no successor, i.e.\ if $\trafun(c) = \emptyset$. We denote the set of terminating configurations by $T$. A \emph{run} of $\system$ is either a sequence $(c_i)_{i \in \N_0}$ where $c_0 \in \initial$ and $\forall i \in \N: (c_{i-1},c_i) \in \trafun$ or a tuple $(c_0,...,c_n)$ where $c_0 \in \initial$, $\forall i \in [n]: (c_{i-1},c_i) \in \trafun$ and $c_n \in T$. In the latter case, the run is called \emph{terminating}.
\end{definition}

\begin{example}
\renewcommand{\t}{\bullet}
\newcommand{\n}{\circ}
\newcommand{\utuple}[2]{\big[{{#1} \atop {#2}}\big]}
We give two small examples of RTSs loosely inspired by Herman's randomized protocol for self-stabilization. We model an array of cells, each of which either holds a token ($\t$) or not ($\n$). The alphabet is $\{\langle, \t,\n, \rangle\}$ and the initial configurations are $\langle (\t + \n)^*\t (\t + \n)^* \rangle$, where $\langle$ and $\rangle$ mark the two ends of the array. A transition moves a token to a neighbour cell, ``swallowing'' its token if the cell is not empty. The language of the transducer is
\begin{center}
$\utuple{\langle}{\langle}\utuple{x}{x}^* \left(\utuple{\t}{\n} \utuple{\n}{\t} + \utuple{\n}{\t} \utuple{\t}{\n} + \utuple{\t}{\t}\utuple{\t}{\n}+\utuple{\t}{\n}\utuple{\t}{\t}\right) \utuple{x}{x}^*\utuple{\rangle}{\rangle}$
\end{center}
\noindent where $\vtuple{x}{x}$ is an abbreviation for $\vtuple{\t}{\t}+\vtuple{\n}{\n}$. The RTS is length-preserving. Intuitively, it satisfies that the set $\langle \n^* \t  \n^* \rangle$ of configurations with exactly one token is almost surely reachable and also almost surely recurrently reachable, for any probability distribution assigning non-zero probability to each transition.

Consider now another RTS in which the array can additionally grow or shrink on the right end. We can model this by adding to the language of the transducer the transitions
\begin{center}
$\utuple{\langle}{\langle}\utuple{x}{x}^*\utuple{\rangle}{\n}\utuple{\sharp}{\rangle} +  \utuple{\langle}{\langle}\utuple{x}{x}^*\utuple{\n}{\rangle}\utuple{\rangle}{\sharp}$.
\end{center}
The new RTS is no longer length-preserving, and the property above no longer holds for every probability distribution.
\end{example}



\subparagraph*{Simulation of Turing machines by regular transition systems.}
In our undecidability proofs we make use of the fact that RTSs can simulate Turing machines. More precisely, a configuration $(w_l, q, w_r)$ of a TM $M$ can be encoded as a word $w_l q w_r$. The transition function $\delta$ of $M$ can then be simulated by $\trafun$ as only the letters at positions next to the head can change. If the RTS is length-preserving, one can set $\initial := \{q_0 w\}\{\square\}^*$ where $w$ is the input to $M$ and terminate if the head of $M$ gets to the last position of the configuration of the RTS; in this case, $M$ accepts $w$ iff there exists a run of the RTS (starting in a configuration of a high enough length) which reaches a configuration containing $q_f$.

\section{Decision problems}

It was proved in~\cite{ToL08} that the problem whether an RTS with a regular reachability relation has a run visiting a set of configurations given by an NFA infinitely often is in \P\ for both length-preserving and general transition relations. Extending this result, we analyse the complexities of infinite reachability problems in an RTS with a regular reachability relation.

We start with a lemma showing that a basic problem about RTSs is $\Sigma_1^1$-complete in the general case and $\Sigma_1^0$-complete in the length-preserving case. The proof of the length-preserving case is very simple, while the general case requires to use a clever result by Harel, Pnueli, and Stavi~\cite{HarelPS83}.

\begin{lemma}\label{infiniterun}
Deciding whether an RTS $\system$ has an infinite run is $\Sigma_1^1$-complete. If $\system$ is length-preserving, the problem is $\Sigma_1^0$-complete.
\end{lemma}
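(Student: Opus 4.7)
My plan is to establish the two complexity claims separately, matching upper and lower bounds in each case.

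For the upper bounds I would argue as follows. In the general case, by definition an infinite run is a function $f\colon \N \to \Sigma^*$ with $f(0) \in \initial$ and $(f(i), f(i+1)) \in \trafun$ for every $i$. Since $\initial$ and $\trafun$ are regular and hence decidable, the body of this condition is arithmetical, so existentially quantifying over $f$ places the problem in $\Sigma_1^1$ by definition. In the length-preserving case, for every fixed $n$ the set of configurations of length $n$ is finite, and the set of initial configurations of length $n$ is computable from the NFA for $\initial$. For each $n$ one can therefore build the finite directed graph of configurations reachable from initial configurations of length $n$ and test whether it contains a reachable cycle. Enumerating $n$ yields a semi-decision procedure, placing the problem in $\Sigma_1^0$.

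For the length-preserving lower bound I would reduce from the halting problem of a deterministic Turing machine $M$ on input $w$. The obstacle is that on a fixed finite tape $M$ may cycle without halting, so a naive simulation would produce spurious infinite runs. To prevent this, I would enrich the tape alphabet with an ``unmarked/marked'' flag on each cell and require every simulated step to convert one unmarked cell into a marked one; a configuration has no successor either when no unmarked cell is left or when $M$'s head would leave the tape, and a self-loop is added at every configuration whose state is $q_f$. Taking as initial configurations all words of the form $q_0 w \square^*$ with every cell unmarked, one then verifies that if $M$ halts on $w$ in $T$ steps using space $S$ then on any tape of length at least $\max(T, S) + |w|$ the simulation reaches $q_f$ and then loops forever, while if $M$ does not halt then any tape of length $n$ is exhausted after at most $n$ steps without ever reaching $q_f$. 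Hence the RTS has an infinite run iff $M$ halts on $w$.

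For the general lower bound I would invoke the theorem of Harel, Pnueli, and Stavi~\cite{HarelPS83}, which states that deciding whether a nondeterministic Turing machine has an infinite computation on a given input is $\Sigma_1^1$-complete. Since RTSs simulate nondeterministic Turing machines in the natural (length-non-preserving) way recalled in the preliminaries -- configurations $(w_l, q, w_r)$ encoded as $w_l q w_r$, with $\trafun$ given by a transducer that applies $\delta$ locally around the head -- the simulation is faithful with respect to infinite runs and yields a polynomial-time reduction. I expect this to be the main obstacle of the whole statement: without the Harel--Pnueli--Stavi construction there is no obvious way to express the existence of an infinite choice sequence by an arithmetical condition, so the step from $\Sigma_1^0$ to $\Sigma_1^1$ genuinely requires an analytic-hierarchy argument.
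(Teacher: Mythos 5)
Your length-preserving arguments and your general-case upper bound are sound. The direct $\Sigma_1^1$ membership argument (quantify over $f\colon\N\to\Sigma^*$ and note that the body of the condition is arithmetical) is if anything cleaner than the paper's route, which establishes membership by reducing to the Harel--Pnueli--Stavi recurrence problem; and your marking/fuel reduction for length-preserving hardness is a valid alternative to the paper's construction, which instead builds a machine $M'$ that swaps acceptance and looping and relies on the simulation running off the end of the tape when $M$ diverges with unbounded space.

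The general-case lower bound, however, has a genuine gap: you have misstated the Harel--Pnueli--Stavi theorem. Deciding whether a nondeterministic Turing machine has \emph{some} infinite computation on a given input is not $\Sigma_1^1$-complete: every configuration has finitely many successors, so the computation tree is finitely branching, and by K\"onig's lemma an infinite computation exists iff computations of every finite length exist, which places the problem in $\Pi_1^0$. What is $\Sigma_1^1$-complete (\cite{HarelPS83}, Proposition~5.1) is the \emph{recurrence} problem: does the machine have an infinite computation that visits the initial state $q_0$ infinitely often? Reducing that problem to the existence of an infinite run of an RTS is not achieved by the ``natural'' simulation, because a non-recurring infinite computation of $M$ would still yield an infinite run of the simulating RTS, breaking the reduction. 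One needs an extra gadget that kills exactly the non-recurring computations: the paper has $\system$ maintain a counter that is set nondeterministically to an arbitrary value $k$ in a single transition each time $M$ visits $q_0$, is decremented at every other step, and forces termination when it reaches $0$. This crucially exploits the infinite branching available to non-length-preserving RTSs (a single configuration with infinitely many successors), which is precisely the feature separating the $\Sigma_1^1$-complete general case from the $\Sigma_1^0$-complete length-preserving one. Without such a step your argument only reduces from an arithmetical problem and does not establish $\Sigma_1^1$-hardness.
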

\begin{proof}
In the length-preserving case, only finitely many configurations are reachable from any configuration; hence, if there exists an infinite run, there exists a path from some $c_0 \in \initial$ to some $c \in \configurations$ which visits $c$ twice. This path is a checkable certificate, proving membership in $\Sigma_1^0$. For hardness, we reduce from the problem whether a Turing machine $M$ accepts the empty input. First, we construct a TM $M'$ which behaves like $M$, but swaps acceptance and looping: $M'$ simulates $M$ while writing down all visited configurations; if $M'$ detects that $M$ visits a configuration for the second time, $M'$ accepts; if $M'$ detects that $M$ accepts, $M'$ goes in an infinite loop. We then simulate $M'$ by $\system$; $\system$ has an infinite run iff $M'$ goes in an infinite loop iff $M$ accepts the empty input.

We now show that the problem is $\Sigma_1^1$-complete in the general case by reducing from and to the following problem: Given a nondeterministic Turing machine $M$, does $M$ have an infinite run on the empty input which visits its initial state $q_0$ infinitely often? This problem is known to be $\Sigma_1^1$-complete~\cite[Proposition~5.1]{HarelPS83}.

Given $\system$, $M$ simulates a run of $\system$ by nondeterministically writing down an initial configuration, repeatedly guessing transitions in the transducer for $\trafun$ before nondeterministically stopping in a final state of the transducer, writing down the successor, and repeating the process. $M$ never visits $q_0$ during the guessing process (otherwise $M$ might never stop guessing an infinitely long successor) and every time a successor is guessed, $M$ visits $q_0$.

For the other direction, $\system$ can simulate $M$ with a counter which indicates when the next time $M$ visits $q_0$ will be. Every time $M$ visits $q_0$, $\system$ nondeterministically sets the counter to a number $k$, and every time $M$ does a transition without visiting $q_0$, $k$ is decreased by 1. If $k$ reaches 0, $\system$ terminates. (Note that $k$ can be set arbitrarily high with only one transition.)
\end{proof}

In the following problems, we assume that the input consists of an NFA for $\initial$, a transducer for $\trafun$, and, where applicable, a transducer for $\reach$ and an NFA for $L$.

\begin{theorem}\label{nonreach}
Sure reachability and sure recurrent reachability are $\Pi_1^1$-complete. If $\system$ is length-preserving, then they are $\Pi_1^0$-complete.
\end{theorem}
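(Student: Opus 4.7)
The plan is to prove matching upper and lower bounds in both settings; the hardness in each case will reduce directly from Lemma \ref{infiniterun} via a uniform trick using the terminating set $T$.

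For the upper bounds, I would show that the complements of both properties lie in $\Sigma_1^0$ in the length-preserving case and in $\Sigma_1^1$ in general. The complement of sure reachability asserts the existence of a run of $\system$ that avoids $L$ entirely; the complement of sure recurrent reachability asserts the existence of a run that visits $L$ only finitely often. In the length-preserving case, from any initial configuration of length $n$ only finitely many (length-$n$) configurations are reachable, so pigeon-hole forces any infinite bad run to contain a cycle; a bad run can therefore be witnessed by a finite lasso whose cycle portion avoids $L$, yielding membership in $\Sigma_1^0$. In the general case, a bad run is witnessed by a function $f \colon \N \to \Sigma^*$; after fixing a bijection $\Sigma^* \cong \N$, this is a second-order object, and the conditions $f(0) \in \initial$, $(f(i), f(i{+}1)) \in \trafun$, and $f(i) \notin L$ (for all, resp.\ all sufficiently large, $i$) are arithmetic in $f$, placing the complement in $\Sigma_1^1$.

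For the lower bounds, the key observation is that for every RTS $\system$ the set $T = \Sigma^* \setminus \trafun|_1$ of terminating configurations is regular, and an NFA for it is computable from the transducer for $\trafun$. Moreover, a run visits $T$ iff it is a terminating (finite) run: terminating runs end in $T$ by definition, and any configuration in $T$ has no $\trafun$-successor, so a run that enters $T$ must stop there. Setting $L := T$ therefore yields that sure reachability of $L$ fails in $\system$ exactly when $\system$ has an infinite run, and likewise for sure recurrent reachability (an infinite run visits $T$ zero times, and if no infinite run exists the recurrent property holds vacuously, since all runs terminate in $T = L$). Since Lemma \ref{infiniterun} shows that existence of an infinite run is $\Sigma_1^1$-complete in general and $\Sigma_1^0$-complete in the length-preserving case, we obtain $\Pi_1^1$- and $\Pi_1^0$-hardness, respectively, for both properties.

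The main delicate point I expect is the $\Sigma_1^1$ upper bound: one has to verify cleanly that ``$(f(i), f(i{+}1)) \in \trafun$'' and ``$f(i) \in L$'' are expressible as arithmetic predicates of $i$ given the transducer for $\trafun$ and the NFA for $L$. This is routine once one notes that the acceptance relations of NFAs and transducers are primitive recursive under a standard encoding of words as natural numbers, but the quantifier structure must be arranged so that only a single second-order quantifier (over $f$) appears. The remaining steps---the lasso certificate in the length-preserving case and the reduction in the hardness direction---follow immediately from Lemma \ref{infiniterun} and the regularity of $T$.
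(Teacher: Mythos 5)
Your upper bounds are essentially sound and close to the paper's (which routes them through Lemma~\ref{infiniterun} rather than arguing directly), up to the small point that a ``bad'' run for sure reachability may also be a \emph{finite} terminating run avoiding $L$, which must be admitted as a certificate alongside the lasso. The lower bounds, however, have a genuine gap. The problems in Theorem~\ref{nonreach} take a transducer for $\reach$ as part of the input (see the sentence preceding the theorem and the caption of Table~\ref{table:results1}); that is the whole point of the To--Libkin setting. Your reduction plugs in the RTS from the hardness part of Lemma~\ref{infiniterun}, which simulates a Turing machine and whose reachability relation is not regular, so you never supply the required transducer and the instance you build is not a valid input. A quick sanity check confirms something must be wrong: with $L:=T$, sure reachability \emph{is} sure termination, which Table~\ref{table:results1} lists as \NL-complete given the $\reach$ transducer --- so it cannot be $\Pi_1^0$- or $\Pi_1^1$-hard. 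The paper's proof spends most of its effort on exactly this obstacle: the step (d) $\leq$ (a) adds a hub configuration $s$ with transitions to and from every configuration precisely so that the constructed system's reachability relation becomes regular, and the intermediate step (c) $\leq$ (d) (making every configuration initial) is what makes that hub trick available.

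Separately, the $L:=T$ reduction for sure \emph{recurrent} reachability fails on its own terms. A run that enters $T$ stops there, so it visits $T$ exactly once --- finitely often, not ``vacuously infinitely often'' as you claim --- and an infinite run visits $T$ zero times. Hence no run ever visits $T$ infinitely often, and sure recurrent reachability of $T$ holds iff $\initial=\emptyset$, independently of whether $\system$ has an infinite run. The paper instead makes $L$ the hub $\{s\}$, which every behaviour of the constructed system is forced to revisit unless the original system has an infinite run, and handles sure reachability by a separate link in the reduction cycle (replace $\initial$ by $\reach(\initial)$ to go from finitely-often to never, then restrict $\trafun$ to $\overline L$ and add self-loops at terminating configurations to reduce to the infinite-run problem).
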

\begin{proof}
We prove that the complements of both problems are equivalent to the problem whether an RTS has an infinite run, both in the length-preserving and in the general case. We do that by reducing these four problems to each other in a cycle:

\begin{enumerate}[(a)]
	\item Given $\system$, $L$, and $\reach$, does there exist a run which only visits $L$ finitely often?
	\item Given $\system$, $L$, and $\reach$, does there exist a run which never visits $L$?
	\item Given $\system$, does there exist an infinite run?
	\item Given $\system$ where all configurations are initial, does there exist an infinite run?
\end{enumerate}
Here, (a) is the complement of the sure recurrent reachability problem, (b) is the complement of the sure reachability problem, and (c) is the problem from Lemma~\ref{infiniterun}.\\

(a) $\leq$ (b): A run visits $L$ finitely often if and only if it visits a configuration from which it never visits $L$ again. Hence the reduction just changes the initial configurations to $\reach(\initial)$ while leaving everything else the same.\\

(b) $\leq$ (c): Given $\system = (\initial, \trafun)$ and $L$, define $\system' := \big(\initial \setminus L, \trafun \cap \big(L \times \overline{L}\big)\big)$ and add self-loops to all terminating configurations.\\

(c) $\leq$ (d): Given $\system = (\initial, \trafun)$, we construct an RTS $\system'$ which has an infinite run starting from any configuration iff $\system$ has an infinite run. The idea is that $\system'$ checks an infinite run of $\system$ for correctness. The configurations of $\system'$ are of the form $c_0\#c_1\#\cdots\#c_n$ where $c_0,...,c_n$ are configurations of $\system$.

In the length-preserving case, $\system'$ checks (using multiple transitions) that $c_0 \in \initial$, that $(c_{i-1}, c_i) \in \trafun$ for all $i \in [n]$, and that there exists an $i<n$ such that $c_i = c_n$. $\system'$ can do that by e.g.\ working like a Turing machine and marking the already checked letters. If any checks fail, $\system'$ terminates; if all checks succeed, $\system'$ unmarks all letters and repeats the process. $\system$ has an infinite run iff $\system$ has a run which loops, i.e.\ visits the same configuration twice, and $\system'$ can check that run forever, which also results in an infinite run. For the converse, note that $\system'$ can only have an infinite run if all checks succeed.

The non-length-preserving case is similar. Instead of checking for a cycle, $\system'$ checks that the current prefix $c_0\#c_1\#\cdots\#c_n$ of a run of $\system$ is valid, and after all checks succeed, nondeterministically adds a configuration $c_{n+1}$ and restarts the process. It is clear that $\system'$ has an infinite run iff $\system$ does.\\

(d) $\leq$ (a): Given $\system = (\initial, \trafun)$ with $\initial = \configurations$, we construct an RTS $\system' = (\initial', \trafun')$ and an NFA for $L$ such that $\system'$ has a run which only visits $L$ finitely often iff $\system$ has an infinite run. We let $\system'$ behave the same way as $\system$, but add a special configuration $s$ (in the length-preserving case, we add infinitely many such configurations, one for each length), set $L = \initial' = \{s\}$ and add the transitions $(s,c)$ and $(c,s)$ for all $c \in \configurations$ to $\trafun'$. Then the reachability relation of $\system'$ is regular as all configurations (or all configurations of the same length) can reach each other. If $\system$ has an infinite run, then $\system'$ has the same infinite run which does not visit $L$; if $\system$ does not have an infinite run, then every run of $\system'$ must visit $L$ again and again.
\end{proof}

\begin{remark}
The undecidability of the sure reachability problem for arbitrary transducers can be immediately deduced from well-known results. Theorem 3.3 of~\cite{Esparza95} proves that the reachability relation of Basic Parallel Processes (BPPs)~\cite{ChristensenHM93,ChristensenHM93b} is expressible in Presburger arithmetic, and so regular under standard encodings of tuples of natural numbers as words~\cite{Haase18,esparza2023automata}. So BPPs are a special case of RTSs in which $\reach$ can be encoded as a transducer. Further, it was shown in~\cite{EsparzaK95,Esparza97} that sure reachability for BPPs is undecidable. The novelty of Theorem~\ref{nonreach} is to show that the problem remains undecidable in the length-preserving case (and to give a simpler proof for arbitrary transducers). Observe that the length-preserving case requires a novel argument: indeed, since sure reachability is $\Pi_1^1$-complete in the general case, it cannot be recursively reduced to sure reachability for length-preserving transducers, which is only $\Pi_1^0$-complete.
\end{remark}

\begin{theorem}\label{terminating}
Deadlock-freedom is \PSPACE-complete.
\end{theorem}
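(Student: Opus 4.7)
The plan for the upper bound is to rephrase deadlock-freedom as an NFA-inclusion test. A configuration is terminating exactly when it has no successor under $\trafun$, so $T = \overline{\trafun|_1}$, and deadlock-freedom is equivalent to $\reach(\initial) \subseteq \trafun|_1$. From the NFA for $\initial$ and the transducer for $\reach$ I construct an NFA for $\reach(\initial) = \initial \circ \reach$ whose size is polynomial in the input, and from the transducer for $\trafun$ I construct an NFA for $\trafun|_1$ whose size is also polynomial, using the bounds on automata operations recalled in the Preliminaries. Checking NFA inclusion is in $\PSPACE$ via the standard on-the-fly subset construction, which gives the upper bound.

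For the lower bound, I would reduce from NFA universality, which is $\PSPACE$-complete. Given an NFA $A$ over $\Sigma$, the reduction produces the RTS $\system = (\initial, \trafun)$ with $\initial := \Sigma^*$, presented by a trivial one-state NFA, and $\trafun := \{(w,w) \mid w \in \Language{A}\}$, which is recognised by a length-preserving transducer of size $O(|A|)$ that simulates $A$ on the diagonal convolution $\vtuple{w}{w}$. Since $\trafun \subseteq \Id$, the reflexive-transitive closure of $\trafun$ is simply the identity relation $\{(w,w) \mid w \in \Sigma^*\}$, which I supply as a one-state transducer for $\reach$. In this RTS one has $\reach(\initial) = \Sigma^*$ and $T = \overline{\Language{A}}$, so $\system$ is deadlock-free iff $\Language{A} = \Sigma^*$. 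The reduction is length-preserving, so the same bound applies in both columns of Table~\ref{table:results1}.

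The main subtlety to watch out for is that the transducer for $\reach$ is part of the input and has to be polynomially bounded in the reduction: if one chose a transition relation whose reflexive-transitive closure was complicated, no small transducer for $\reach$ would in general exist and the reduction would break down. Taking $\trafun \subseteq \Id$ sidesteps this by collapsing $\reach$ to the identity relation regardless of $A$, so the whole construction remains polynomial-time.
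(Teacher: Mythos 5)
Your proof is correct and follows essentially the same route as the paper: the upper bound is the same on-the-fly \PSPACE{} test involving polynomial-size NFAs for $\reach(\initial)$ and for the set of non-terminating configurations ($\trafun|_1 = \trafun^{-1}(\configurations)$), and the lower bound is the same reduction from NFA universality, merely with $\trafun := \{(w,w) \mid w \in \Language{A}\}$ in place of the paper's $\Language{A} \times \configurations$. Your variant has the minor advantage of being length-preserving, so it covers both columns of Table~\ref{table:results1} directly.
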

\begin{proof}
The set of terminating configurations is $\overline{\trafun^{-1}(\configurations)}$, and thus there exists a terminating run iff $\reach(\initial) \cap \overline{\trafun^{-1}(\configurations)} \neq \emptyset$. This is decidable in nondeterministic polynomial space by e.g.\ guessing a configuration $c$ letter by letter and checking on the fly that $c \in \reach(\initial)$ and $c \notin \trafun^{-1}(\configurations)$; indeed, this only requires enough memory to store a state of the NFA for $\reach(\initial)$ and a set of states of the NFA for $\trafun^{-1}(\configurations)$. For hardness, we can reduce from the universality problem: Given an NFA $A$, setting $\initial := \configurations$ and $\trafun := \Language{A} \times \configurations$ makes the problem equivalent to deciding whether $A$ is universal.
\end{proof}

\section{Almost sure properties}

In this section we present our results on the decidability and complexity of almost sure properties. We see the RTS $\system$ as a Markov chain, i.e.\ we have a probability measure $\prob \colon \trafun \to (0,1]$ which assigns a positive probability to every transition $(c,c') \in \trafun$ and satisfies $\forall c \in \configurations: \sum_{c' \in \trafun(c)} \prob(c,c') = 1$. This induces as usual a probability space 
$(\textit{run}(c), \mathbb{F},{\cal P})$ for every configuration $c$, where $\textit{run}(c)$ is the set of runs starting at $c$, $\mathbb{F}$ is the $\sigma$-field generated by all basic cylinders $\textit{run}(w)$ where $w$ is a finite path starting at $c$, and ${\cal P} \colon \mathbb{F} \rightarrow [0,1]$ is the unique probability function such that ${\cal P}(\textit{run}(w)) = \Pi_{i=1}^m \prob(c_{i-1}, c_i)$ where $w = (c_0,...,c_m)$.

In the length-preserving case, a.s.\ reachability, recurrent reachability and termination only depend on the topology of the Markov chain $\system$, and not on the numerical values of the probabilities of its transitions.
This fact is well-known for finite Markov chains. To show that it also holds for $\system$, which can be infinite, observe that, since the number of configurations of a given length is finite, every configuration can only reach finitely many configurations, and so $\system$ is the disjoint union of a finite or countably infinite family of finite Markov chains. It follows that our almost-sure properties hold if{}f they hold for every Markov chain in the family, and therefore do not depend on the numerical values either.

In the general case, the property depends on the numerical values. For example, consider a random walk on $\{a\}^*$ with transitions $\trafun = \{(a^n, a^{n+1}), (a^{n+1},a^n) \mid n \in \N_0\}$ and assume that $p := \prob(a^n, a^{n+1})$ is the same for all $n$. The probability that a run starting at $a$ eventually reaches $\varepsilon$ is 1 if and only if $p \leq 0.5$.

Again, we assume that the input consists of an NFA for $\initial$, a transducer for $\trafun$, and, where applicable, a transducer for $\reach$ and an NFA for $L$.

\subsection{The length-preserving case}

We make use of the fact that a run of $\system$ almost surely eventually visits a bottom strongly connected component (bSCC) of $\system$. Further, once in a bSCC, the run either terminates (if the bSCC is trivial, i.e.\ consists of a terminating configuration) or stays in that bSCC forever, visiting each configuration of the bSCC infinitely often.

\subparagraph{Almost sure reachability.}

\begin{proposition}\label{asreach}
Almost sure reachability is $\Pi_1^0$-complete for length-preserving RTSs.
\end{proposition}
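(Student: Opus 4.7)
My plan is to first reduce almost sure reachability in the length-preserving case to the graph-theoretic condition $\reach(\initial) \subseteq \reach^{-1}(L)$, that is, every configuration reachable from $\initial$ can itself reach some configuration of $L$. Using the bSCC decomposition recalled in the section preamble, together with the fact that a length-preserving $\system$ is a countable disjoint union of finite Markov chains (one per word length), this is exactly the classical finite-Markov-chain characterization of $\mathcal{P}(\textbf{F}\, L) = 1$, applied sub-chain by sub-chain. The non-trivial direction is: if some $c \in \reach(\initial)$ cannot reach $L$, then the finite path witnessing $c \in \reach(\initial)$ has positive probability and every continuation from $c$ avoids $L$, yielding a positive-measure set of runs that never visit $L$.

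For the $\Pi_1^0$ upper bound I would exploit length-preservation: for every fixed length $n$ the induced subgraph on $\Sigma^n$ has at most $|\Sigma|^n$ nodes, so the question ``does some $c \in \reach(\initial) \cap \Sigma^n$ satisfy $\reach(c) \cap L = \emptyset$?'' is decidable by ordinary finite graph search in this sub-chain, starting from $\initial \cap \Sigma^n$. The complement of a.s.\ reachability is then the $\Sigma_1^0$ statement that such a witness exists for some $n$: enumerate $n = 0, 1, 2, \ldots$ and run the decision procedure for each.

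For $\Pi_1^0$-hardness I would reduce from the $\Pi_1^0$-complete problem ``does a deterministic TM $M$ not accept the empty input?''. Using the length-preserving simulation of TMs recalled in the preliminaries, I would construct $\system$ with $\initial = \{q_0\}\{\square\}^*$ whose simulation transitions step $M$ on tape of length $k+1$ when started from $q_0\square^k$, enriched over a suitably extended alphabet by two length-preserving gadgets: an \emph{escape} transition from every configuration \emph{not} containing $q_f$ (including boundary configurations at which the simulation would otherwise get stuck) to a distinguished target $t_n$ of the same length $n$, with $L := \{t_n \mid n \geq 0\}$; and a transition from every configuration containing $q_f$ to a sink $s_n$ of the same length, with no outgoing transitions and $s_n \notin L$. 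Correctness is then straightforward: if $M$ accepts $\varepsilon$ using tape $m$, then for every $k \geq m-1$ the deterministic simulation from $q_0\square^k$ eventually reaches $s_{k+1}$, a reachable configuration that cannot reach $L$, so a.s.\ reachability fails; conversely, if $M$ does not accept, then no configuration containing $q_f$ is ever reached, no sink $s_n$ is reachable, and every reachable configuration either already lies in $L$ or reaches $L$ in one step via its escape transition.

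The main obstacle I anticipate is the interaction of the gadgets with the boundary behaviour of the TM simulation. The simulation convention from the preliminaries is to terminate when the head of $M$ reaches the right boundary of the tape; if such ``stuck'' configurations had no escape transition, they would be reachable sinks outside $L$ for every tape length, spuriously falsifying a.s.\ reachability even when $M$ does not accept and killing the reduction. Making the escape transition available from every non-accepting configuration---boundary or not---is the simple but essential design choice that closes the argument.
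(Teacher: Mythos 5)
Your proposal contains two genuine gaps, one in each half. First, the characterisation you reduce to is wrong: $\reach(\initial) \subseteq \reach^{-1}(L)$ is not equivalent to $\mathcal{P}(\textbf{F}\,L)=1$, because a run that passes \emph{through} $L$ on its way to a configuration from which $L$ is unreachable has already satisfied $\textbf{F}\,L$. Concretely, take $\initial=\{a\}$, $L=\{b\}$, and length-preserving transitions $a \to b$, $b \to c$, $c \to c$ over one-letter configurations: every run visits $L$ at its second step, so almost sure reachability holds, yet $c \in \reach(\initial)$ with $\reach(c) \cap L = \emptyset$, so your condition fails and your per-length graph search would wrongly report a witness. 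Your ``non-trivial direction'' breaks exactly here: the finite path witnessing $c \in \reach(\initial)$ may itself visit $L$. The correct certificate---the one the paper uses---is a path $(c_0,\dots,c_n)$ with $c_0 \in \initial$, $c_i \notin L$ \emph{for all} $i$, and $c_n$ in a bSCC disjoint from $L$ (checkably: $\reach(c_n) \subseteq \reach^{-1}(c_n)$ and $\reach(c_n) \cap L = \emptyset$). Your condition is in fact the one appearing in Lemma~\ref{asinfinitereachlemma} for \emph{recurrent} reachability, where an early visit to $L$ does not help. The upper bound is repairable (restrict the finite search to $L$-avoiding paths), but as written the step fails.

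The second gap is more fundamental: per the section preamble, the input to this problem includes a \emph{transducer for $\reach$}, so a hardness reduction must produce one. Your construction cannot. Its reachability relation contains all pairs $(q_0\square^k, c)$ with $c$ in the run of $M$, and $(c, s_n)$ exactly when $c$ reaches an accepting configuration; this relation is non-regular for general $M$, and no computable reduction can output a transducer for it---if it could, intersecting $\reach(\initial)$ with the regular set of sinks $s_n$ would decide whether $M$ accepts the empty input. This is precisely the difficulty the paper's reduction is engineered around: it adds a hub configuration $s$ (per length) with transitions $(c,s)$ and $(s,c)$ for \emph{every} configuration $c$, plus $(a,t)$ for accepting $a$ and $(s,t)$, which collapses $\reach$ to $\big((C \cup \{s\}) \times (C \cup \{s,t\})\big) \cup \{(t,t)\}$---trivially regular with a small transducer---while the probabilistic question still encodes acceptance, since reaching the sink $t$ without visiting $L=\{s\}$ requires a genuine accepting computation. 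Without some such collapsing device, your reduction only establishes hardness of the weaker variant in which no $\reach$-transducer is supplied, which is not the statement being proved.
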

\begin{proof}
If there exists an initial configuration $c_0 \in \initial$ from which reaching $L$ has probability less than 1, then there exists a bSCC reachable from $c_0$ such that neither the bSCC nor the path from $c_0$ to the bSCC intersects $L$. In other words, there exists a path $(c_0,...,c_n)$ such that $c_0 \in \initial$, $c_i \notin L$ for all $i$, and $c_n$ is in a bSCC which does not intersect $L$, i.e.\ $\reach(c_n) \subseteq \reach^{-1}(c_n)$ and $\reach(c_n) \cap L = \emptyset$. Given such a path, these conditions can be checked, proving semi-decidability of the complement and thus proving that almost sure reachability is in $\Pi_1^0$.

For hardness, we reduce from the non-emptiness problem for Turing machines. Given a TM $M$, we define $\initial$ to be the set of input configurations of $M$ with any number of blank symbols, that is, the set of all words $wv$ where $w$ is the encoding of an input configuration of $M$, and $v \in \{\square\}^*$ is a word of blanks. Further, we
let $\trafun$ simulate the transitions of $M$. Moreover, for every length, we add two special configurations $s, t$ to the RTS and add the following transitions to $\trafun$: $(c,s)$ and $(s,c)$ for every configuration $c$ of $M$, $(a,t)$ for every accepting configuration $a$ of $M$, and $(s,t)$. Then $\reach = \big((C \cup \{s\}) \times (C \cup \{s,t\})\big) \cup \{(t,t)\}$ where $C$ is the set of configurations of $M$, so $\reach$ is regular. Let $L = \{s\}$ (of all lengths). If $M$ has an input $w$ which it accepts, then there exists a run of $\system$ starting from $q_0w$ (with enough blank symbols) which reaches an accepting configuration of $M$ and then goes to $t$, never visiting $s$. This run has positive probability as the number of steps until $t$ is reached is finite, and thus the probability of visiting $L$ is not 1. Conversely, if $M$ does not have an input which it accepts, then no run can reach $t$ without visiting $s$ first, and a transition to $s$ will occur eventually almost surely, i.e.\ the probability of visiting $L$ is 1.
\end{proof}

For almost sure recurrent reachability, we introduce the following characterisation, which will also be useful in later sections.

\begin{lemma}\label{asinfinitereachlemma}
$\system$ reaches $L$ infinitely often almost surely iff $\reach(\initial) \subseteq \overline T \cap \reach^{-1}(L)$.
\end{lemma}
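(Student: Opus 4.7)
The plan is to prove both directions by exploiting the bSCC structure of the Markov chain, which is available because in the length-preserving case $\system$ decomposes as a disjoint union of finite Markov chains (one per word length) and every run is almost surely absorbed into a bSCC.

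For the forward direction I would argue by contrapositive. Assume $\reach(\initial) \not\subseteq \overline T \cap \reach^{-1}(L)$, so there exists $c \in \reach(\initial)$ with either $c \in T$ or $c \notin \reach^{-1}(L)$. Since $c$ is reachable from some initial configuration $c_0$ by a finite path, and each transition along it has strictly positive probability, the cylinder of runs extending this path has positive probability. In the first case ($c \in T$) any such run terminates at $c$ and visits $L$ only finitely often; in the second case ($c$ cannot reach $L$) any extension visits $L$ at most along the finite prefix leading to $c$. Either way a positive-probability set of runs visits $L$ only finitely often, contradicting a.s.\ recurrent reachability.

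For the backward direction, assume $\reach(\initial) \subseteq \overline T \cap \reach^{-1}(L)$ and fix an initial configuration $c_0$. The run from $c_0$ stays inside the finite sub-chain of configurations of length $|c_0|$, so by the standard bSCC theorem for finite Markov chains it almost surely enters some bSCC $B \subseteq \reach(c_0) \subseteq \reach(\initial)$. The inclusion $B \subseteq \overline T$ rules out trivial bSCCs (a single terminating configuration), so $B$ is non-trivial and every $c \in B$ is visited infinitely often a.s.\ once $B$ is entered. The inclusion $B \subseteq \reach^{-1}(L)$ ensures that from every $c \in B$ some element of $L$ is reachable; since $B$ is a bSCC, this forces $B \cap L \neq \emptyset$. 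Hence some configuration of $L$ is visited infinitely often a.s., giving $\mathcal{P}(\textbf{GF}\,L) = 1$.

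The main obstacle is clean bookkeeping in the forward direction when $c$ itself might lie in $L$: I want to use $\reach^{-1}(L)$ as the reflexive-transitive pre-image, so $c \in L$ automatically satisfies $c \in \reach^{-1}(L)$ and the only way to violate the containment is through a genuinely ``dead'' configuration with respect to $L$, which is exactly what makes the positive-probability argument go through. The backward direction relies crucially on the length-preserving assumption to reduce to finite Markov chain theory; without it one would need to worry about infinite bSCCs or transient behaviour escaping to longer words.
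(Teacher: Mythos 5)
Your proof is correct and follows essentially the same route as the paper: both directions reduce to the fact that, in the length-preserving case, $\system$ splits into finite Markov chains where every run is almost surely absorbed into a bSCC, and the containment $\reach(\initial) \subseteq \overline T \cap \reach^{-1}(L)$ is exactly the statement that every reachable bSCC is non-trivial and meets $L$. Your version just spells out the contrapositive and the positive-probability cylinder argument that the paper leaves implicit.
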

\begin{proof}
Every run of $\system$ eventually visits a bSCC almost surely, and thus every infinite run visits $L$ infinitely often iff every reachable bSCC is non-trivial and contains a configuration in $L$. This is the case iff no terminating configuration is reachable and $L$ can be reached from every reachable configuration, or formally, $\reach(\initial) \subseteq \overline T \cap \reach^{-1}(L)$.
\end{proof}

\begin{proposition}\label{asinfinitereach}
Almost sure recurrent reachability is \PSPACE-complete for length-preserving RTSs.
\end{proposition}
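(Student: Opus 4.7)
The plan is to exploit the characterisation in Lemma~\ref{asinfinitereachlemma}: a.s.\ recurrent reachability fails exactly when $\reach(\initial) \not\subseteq \overline T$ or $\reach(\initial) \not\subseteq \reach^{-1}(L)$, so the complement problem reduces to two NFA non-inclusion tests on automata of polynomial size in the input.

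For the \PSPACE\ upper bound, I would first assemble three polynomial-size NFAs from the inputs: one for $\reach(\initial) = \initial \circ \reach$, one for $\reach^{-1}(L) = \reach \circ L$, and one for $\trafun^{-1}(\configurations)$, obtained by projecting the transducer for $\trafun$ onto its first component. Since $T = \overline{\trafun^{-1}(\configurations)}$, the first disjunct is the non-inclusion $\reach(\initial) \not\subseteq \trafun^{-1}(\configurations)$, solved exactly as in the proof of Theorem~\ref{terminating}: guess a witness letter by letter while tracking one state of the NFA for $\reach(\initial)$ and, via an on-the-fly subset construction, a set of states of the NFA for $\trafun^{-1}(\configurations)$. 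The second disjunct is another NFA non-inclusion, handled by the same technique. Both checks run in nondeterministic polynomial space and therefore in \PSPACE.

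For \PSPACE-hardness I would reduce from NFA universality, as in Theorem~\ref{terminating}, but adapted to the length-preserving setting. Given an NFA $A$ over $\Sigma$, set $\initial := \Sigma^*$, $L := \Sigma^*$, and $\trafun := \{(w,w) \mid w \in \Language{A}\}$, a self-loop on every accepted word. The transducer for $\trafun$ has size $\O(|A|)$ and is length-preserving; because $\trafun$ consists solely of fixed points, $\reach$ is just the identity on $\Sigma^*$, recognised by a constant-size transducer that we supply as the required input. Then $\reach^{-1}(L) = \Sigma^*$ makes the second disjunct vacuous, while $\reach(\initial) = \Sigma^*$ and $T = \Sigma^* \setminus \Language{A}$, so a.s.\ recurrent reachability holds iff $\Language{A} = \Sigma^*$.

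The main obstacle is mild and mostly bookkeeping: matching the polynomial size bounds of the auxiliary NFAs to the standard on-the-fly inclusion check, and verifying that the hardness reduction respects the length-preserving requirement. Conceptually, the heavy lifting is already done by Lemma~\ref{asinfinitereachlemma}; without it, one would have to argue directly about bottom SCCs of an infinite Markov chain, which would be substantially more delicate.
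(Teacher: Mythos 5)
Your proposal is correct and follows essentially the same route as the paper: both rest on Lemma~\ref{asinfinitereachlemma} and decide the complement by guessing a witness configuration letter by letter against polynomial-size NFAs for $\reach(\initial)$, $\trafun^{-1}(\configurations)$ and $\reach^{-1}(L)$, concluding via $\NPSPACE = \PSPACE$. The only difference is cosmetic: the paper's hardness reduction goes from NFA inclusion (setting $\trafun := \Id$, $\initial := \Language{A}$, $L := \Language{B}$ and exploiting the $\reach^{-1}(L)$ conjunct of the characterisation), whereas yours goes from NFA universality and exploits the $\overline{T}$ conjunct; both are equally valid.
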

\begin{proof}
By Lemma~\ref{asinfinitereachlemma}, it suffices to show that deciding whether $\reach(\initial) \subseteq \overline T \cap \reach^{-1}(L)$ is \PSPACE-complete. One can decide this in nondeterministic polynomial space by e.g.\ guessing a configuration $c$ letter by letter and checking on the fly that $c \in \reach(\initial)$, $c \in \overline T$, and $c \notin \reach^{-1}(L)$. For that, one can run $c$ on the corresponding NFAs by memorizing the current set of states after each letter of $c$, and checking whether that set contains a final state of the NFA at the end. Since all involved NFAs have polynomial size (note that $\overline T = \trafun^{-1}(\configurations)$), this algorithm takes polynomial space, and membership in \PSPACE\ follows from $\NPSPACE=\PSPACE$.

For hardness, we can reduce from the NFA subset problem, i.e., whether $\Language{A} \subseteq \Language{B}$ for given NFAs $A, B$: By setting $\initial := \Language{A}$, $\trafun := \{(c,c) \mid c \in \configurations\}$, and $L := \Language{B}$, almost sure recurrent reachability holds iff $\Language{A} \subseteq \Language{B}$.
\end{proof}

We now consider almost sure termination, the property for which the proof is most involved. Again, we first prove a characterisation. We say that $\system$ \emph{terminates almost surely} if for every initial configuration $c$, the set of runs starting at $c$ that end in a terminating configuration has probability 1.

\begin{lemma}\label{asterminationlemma}
$\system$ terminates almost surely iff $\reach(\initial) \setminus \reach^{-1}(T) = \emptyset$.
\end{lemma}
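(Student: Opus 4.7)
The plan is to exploit the same bSCC-based structural fact that was used for almost sure recurrent reachability: because $\system$ is length-preserving, the set of configurations reachable from any given configuration is finite, so every configuration of a given length, together with its reachable descendants, forms a finite Markov chain. Hence from every configuration a run almost surely ends up in some bSCC, and the long-run behaviour of $\system$ is governed by the reachable bSCCs.

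For the $(\Leftarrow)$ direction, I would assume $\reach(\initial) \subseteq \reach^{-1}(T)$, so that every reachable configuration can reach some terminating one. Let $B$ be any bSCC that is reachable from an initial configuration, and pick $c \in B$. Then $c$ reaches some $t \in T$; by bottom-closedness $t \in B$. Since $t$ has no successor at all, $B = \{t\}$ is a trivial bSCC consisting of a single terminating configuration. Thus every reachable bSCC is a terminating singleton, so the almost-sure event ``a bSCC is eventually reached'' coincides with ``a terminating configuration is eventually reached,'' giving a.s. termination.

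For the $(\Rightarrow)$ direction I would argue by contrapositive. Suppose there exists $c \in \reach(\initial) \setminus \reach^{-1}(T)$. Then some $c_0 \in \initial$ reaches $c$ via a finite path $c_0, c_1, \ldots, c_n = c$, which has probability $\prod_{i=1}^{n} \prob(c_{i-1},c_i) > 0$. From $c$ no configuration of $T$ is reachable, so every run extending this prefix is infinite, i.e.\ non-terminating. The basic cylinder over $(c_0,\ldots,c_n)$ is then contained in the set of non-terminating runs from $c_0$, witnessing positive probability of non-termination and contradicting a.s. termination.

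The only delicate point is confirming that the length-preserving hypothesis is genuinely used in the bSCC step: in a general (non-length-preserving) RTS a configuration could have infinitely many descendants, and ``a run almost surely reaches a bSCC'' would no longer be automatic. Within the length-preserving setting, however, this reduces to the standard finite-Markov-chain fact applied component-wise, so the argument goes through cleanly without further technical machinery.
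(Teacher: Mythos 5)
Your proof is correct and follows essentially the same route as the paper's: the forward direction via the positive-probability cylinder over a finite path to a configuration from which $T$ is unreachable, and the backward direction by showing every reachable bSCC is a terminating singleton and invoking the a.s.\ absorption into bSCCs, justified by the length-preserving decomposition into finite Markov chains. No gaps.
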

\begin{proof}
$\reach(\initial) \setminus \reach^{-1}(T)$ is the set of all reachable configurations $c$ such that no terminating configuration can be reached from $c$. If such a $c$ exists, then reaching it has a positive probability, and since the run cannot terminate from $c$, $\system$ terminates with probability less than 1. For the converse, if terminating configurations can be reached from any reachable configuration, then no reachable non-trivial bSCCs exist, i.e.\ every reachable bSCC is a terminating configuration. Since a run of $\system$ eventually reaches a bSCC almost surely, the statement follows.
\end{proof}

\begin{proposition}\label{astermination}
Almost sure termination is \EXPSPACE-complete for length-preserving RTSs. If an NFA for the set of terminating configurations is provided in the input, almost sure termination becomes \PSPACE-complete.
\end{proposition}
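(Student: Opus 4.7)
The plan rests on \Cref{asterminationlemma}, which reduces almost sure termination to checking whether $\reach(\initial) \setminus \reach^{-1}(T)$ is empty. I would establish the upper bounds by analyzing the NFA sizes of $\reach(\initial)$ and $\reach^{-1}(T)$ and applying the guess-and-check procedure used in \Cref{asinfinitereach}, and the lower bounds by reducing, respectively, from NFA containment and from the acceptance problem for exponentially space-bounded Turing machines.

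For the upper bounds, the NFA for $\reach(\initial)$ is always polynomial. When $T$ is supplied, the NFA for $\reach^{-1}(T)$ is also polynomial (compose the transducer for $\reach$ with the NFA for $T$), and nondeterministically guessing $c$ letter by letter while maintaining an NFA state for $\reach(\initial)$ and a powerset state for $\reach^{-1}(T)$ uses polynomial space, giving $\NPSPACE = \PSPACE$. When $T$ is not part of the input, we only have $\overline T = \trafun^{-1}(\configurations)$ as a polynomial NFA; complementing it to obtain $T$ costs one exponential, so the NFA for $\reach^{-1}(T)$ becomes exponential, and the same procedure now uses exponential space, yielding \EXPSPACE. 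For the \PSPACE\ lower bound I would adapt the reduction in \Cref{asinfinitereach} from NFA containment $\Language{A} \subseteq \Language{B}$: set $\initial := \Language{A}$, supply $B$ as the NFA for $T$, and pick $\trafun$ as a simple length-preserving transducer whose set of dead-ends coincides with $\Language{B}$ and which keeps $\reach$ trivial, so that $\reach(\initial) \subseteq \reach^{-1}(T)$ is exactly the containment in question.

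The interesting case is the \EXPSPACE\ lower bound when $T$ is not supplied. I would reduce from the acceptance problem for deterministic Turing machines $M$ running in space $2^n$. Given such an $M$ and input $w$, I build a length-preserving RTS $\system$ whose configurations encode the tape of $M$ together with its head position and state, with $\initial$ accepting $q_0 w \square^*$, and design the transducer for $\trafun$ so that: (i) it simulates one step of $M$ on every non-accepting configuration; (ii) it additionally sends every non-accepting configuration to a distinguished dead-end $s$ (one for each length) that has no outgoing transitions; and (iii) on configurations containing the accepting state $q_f$ it allows only a self-loop. Then $T$ consists exactly of the $s$-configurations, and if $M$ accepts $w$ the RTS reaches, on every sufficiently long tape, a $q_f$-configuration from which no dead-end is reachable, so a.s.\ termination fails; otherwise every reachable configuration has a direct transition to $s$, and a.s.\ termination holds. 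The main technical obstacle will be keeping the transducer polynomial while coping with tapes of arbitrary length: in particular I would route any attempted out-of-bounds move of $M$ directly to the sink $s$, so that too-short tapes trivially admit a.s.\ termination and do not interfere with the faithful simulation on tapes of length at least $2^n$.
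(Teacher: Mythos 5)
Your upper bounds and your \PSPACE{} lower bound match the paper's proof: membership follows from Lemma~\ref{asterminationlemma} by guessing $c$ letter by letter and checking $c \in \reach(\initial)$ and $c \notin \reach^{-1}(T)$ on NFAs that are polynomial when $T$ is given and exponential when $T$ must be obtained as $\overline{\trafun^{-1}(\configurations)}$, and the \PSPACE{} hardness is the same NFA-containment reduction.

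The \EXPSPACE{} lower bound, however, has a genuine gap. The problem takes as input not only $\initial$ and $\trafun$ but also a \emph{transducer for $\reach$}, so your reduction must produce one of polynomial size. In your construction $\trafun$ simulates single steps of $M$, so $\reach$ contains the step-reachability relation of the Turing machine itself; this relation is in general not regular (e.g., its pre-image of the accepting configurations is the set of configurations from which $M$ accepts, which is not a regular language over all tape lengths), and certainly admits no polynomial transducer. This is exactly the obstacle the paper's construction is designed to circumvent: there, a single configuration of the RTS encodes an \emph{entire run} of $M$ (a $\#$-separated sequence of TM configurations), and the transitions only mark and unmark symbols to search for a ``mistake'' in the encoding, never changing the underlying word. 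Consequently every reachable configuration is reached in at most two steps and $\reach = \Id \cup \trafun \cup \trafun^2$ has a polynomial transducer. The second idea you are missing is how to certify, with a polynomial-size automaton, that a mistake occurs at the exponential distance $m$ separating consecutive TM configurations: the paper takes $m = \prod_{i=1}^n p_i$ for the first $n$ primes, so that ``distance not divisible by $m$'' is recognized by placing side by side $n$ automata with $p_i$ states each. Without replacing step-by-step simulation by this run-checking scheme (or something equivalent), your reduction does not yield a valid instance of the problem.
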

\begin{proof}
From the characterisation of Lemma~\ref{asterminationlemma}, it is easy to show that the problem is \PSPACE-complete in the case where an NFA for $T$ is given in the input, one can e.g.\ reduce from and to the NFA subset problem. For membership in \EXPSPACE, observe that $T = \overline{\trafun^{-1}(\configurations)}$ has an NFA of exponential size in the input. One can thus build a polynomial NFA for $\reach(\initial)$ and an exponential NFA for $\reach^{-1}(T)$, guess a configuration $c$ letter by letter and check that $c \in \reach(\initial)$ and $c \notin \reach^{-1}(T)$. This is a nondeterministic algorithm using exponential space, and membership follows from $\mathsf{NEXPSPACE}=\EXPSPACE$.

For \EXPSPACE-hardness, we reduce from the problem whether an exponentially space-bounded Turing machine $M = (Q,\Sigma',\Gamma,\delta,\square,q_0,q_f)$ accepts the empty input. The configurations of our RTS $\system$ are going to be encodings of runs of $M$, and a run of $\system$ cannot terminate if and only if it starts with the encoding of the accepting run of $M$ on the empty input. We encode a run of $M$ as a word consisting of encodings of configurations of $M$ separated by $\#$.

Let $n:=|M|$ and let $p_1$, ..., $p_n$ be the first $n$ prime numbers. Then $\sum_{i=1}^n p_i$ is polynomial in $n$ by the prime number theorem while $m := \prod_{i=1}^n p_i$ is exponential in $n$; we can thus assume that the run of $M$ on the empty input uses only the first $m$ tape cells. An NFA recognizing the language of words not divisible by $p_i$ only needs $p_i$ states. Thus, by putting $n$ NFAs side by side, each of which only accepts words of length not divisible by $p_i$, one can construct an NFA of polynomial size in $n$ which only accepts words of length not divisible by $m$. This fact will be used in our reduction.

Let $\Sigma_1 := Q \cup \Gamma \cup \{\#\}$ and $\Sigma_2 := \{x' \mid x \in \Sigma_1\}$. We call symbols in $\Sigma_1$ \emph{unmarked} and symbols in $\Sigma_2$ \emph{marked}. We denote the symbol in position $i$ of a configuration of $\system$ by $x_i$. Let $\Sigma := \Sigma_1 \cup \Sigma_2$, $\configurations := \Sigma^*$ and $\initial := \{\# q_0\} \{\square\}^* \{\#\} \Sigma_1^* \{q_f\}$, i.e.\ the first configuration of $M$ in an initial configuration of $\system$ is the empty input to $M$, and the last symbol is the accepting state of $M$. $\trafun$ has three types of transitions: $\trafun_\mathit{mark}$, which marks some symbols, and $\trafun_\mathit{unmark}$, which unmarks all symbols, and $\trafun_\mathit{end}$, which is used to terminate certain runs. Specifically, $\trafun_\mathit{mark}$ nondeterministically chooses a position $i$ and a distance $k>2$ and marks the symbols at positions $i-1$, $i$, $i+1$, $i+2$, and $i+k$. For this transition to be executed, three conditions are required:
\begin{itemize}
	\item No symbol is marked.
	\item There is exactly one $\#$ in the subword $x_{i+1} \cdots x_{i+k}$.
	\item If $x_{i-1}x_ix_{i+1}x_{i+2}$ is part of a configuration of $M$, then $x_k$ is \emph{not} the symbol appearing in place of $x_i$ in the next configuration of $M$. In particular, if $x_i = \#$, then $x_k \neq \#$.
\end{itemize}
The last point can be intuitively described as the encoding of the run having a ``mistake'' at position $x_k$. This description only fits if $\trafun_\mathit{mark}$ chooses the correct distance; for other distances, even the correct run has a mistake. It will be relevant, however, that the encoding of the correct run does not have a mistake if distance $m$ is chosen.

$\trafun_\mathit{unmark}$ takes as input a configuration with exactly five marked symbols, checks that the distance between the second and the fifth is not a multiple of $m$, and unmarks all symbols. If the distance is a multiple of $m$, the transition cannot occur. As explained above, this can be done with a transducer of polynomial size in $n$.

$\trafun_\mathit{end}$ takes as input a configuration where the distance between any two $\#$s is not a multiple of $m$, and replaces all symbols by $\#'$, after which the run terminates.

Observe that transitions only mark and unmark symbols and never change them. Thus every reachable configuration can be reached in at most two steps, and $\reach = \{(c,c) \mid c \in \configurations\} \cup \trafun \cup \trafun^2$ is regular and has a transducer of polynomial size in $n$. Furthermore, a run of $\system$ can only terminate if either the initial configuration has $\#$s at a distance not divisible by $m$, or the run reaches a configuration where the distance between the marked symbols $x_i$ and $x_{i+k}$ chosen by $\trafun_\mathit{mark}$ is a multiple of $m$, i.e.\ $\trafun_\mathit{mark}$ finds a mistake at a distance divisible by $m$.

If $M$ accepts the empty input, a run of $\system$ can start with the encoding of the accepting run of $M$. Since the run only uses the first $m$ tape cells, there exists such an encoding where the distance between two consecutive $\#$ is always $m$; therefore, the run of $\system$ can never terminate with $\trafun_\mathit{end}$. Moreover, $\trafun_\mathit{mark}$ can only choose distances $k \leq 2m-1$ as otherwise there will be two $\#$ between position $i+1$ and $i+k$. If $k \neq m$, then $\trafun_\mathit{unmark}$ will unmark the positions in the next transition; and $k = m$ cannot be chosen because the correct run has no ``mistakes'' for $\trafun_\mathit{mark}$ to find. Hence such a run cannot terminate.

If $M$ does not accept the empty input, then there exists no run of $M$ starting with the empty input and ending with an accepting configuration. That is, every encoding of such a run in $\system$ must have a mistake, i.e.\ there must exist symbols $x_{i-1}$, $x_i$, $x_{i+1}$, $x_{i+2}$, $x_{i+lm}$ where $x_{i+lm}$ does not result from $x_{i-1}x_ix_{i+1}x_{i+2}$, where $l \in \N$ s.t.\ $lm$ is the distance between two consecutive $\#$. Such a mistake will almost surely be eventually ``found'' by $\trafun_\mathit{mark}$, and the run will terminate. If the distance between two consecutive $\#$ is not a multiple of $m$, then the run can terminate with $\trafun_\mathit{end}$. In both cases, the run will almost surely eventually terminate.\\
\end{proof}

\subsection{The general case}

In the general case, not every run ends up in a bSCC almost surely; bSCCs do not even necessarily exist. Furthermore, the probabilities of the transitions can affect the answer, which introduces the question of how $\prob$ is provided in the input, and how to deal with configurations with infinitely many successors. We show that the problems are undecidable even in the special case where every configuration has finitely many successors and $\prob$ is the uniform probability measure, i.e.\ $\forall (c,c') \in \trafun: \prob(c,c') = \frac{1}{|\trafun(c)|}$.

We prove undecidability of the almost sure termination problem; the other problems can be easily reduced to it. For that, we reduce from the problem whether a deterministic Turing machine $M$ loops on the empty input, i.e.\ reaches the same configuration twice. A run of our RTS $\system$ will simulate the run of $M$ on the empty input with high probability and have a low probability (the longer the configuration of $\system$, the lower) to jump to a special configuration $s$. From $s$, the run can terminate, restart the simulation of the run of $M$ on the empty input, or, with low probability, jump to any configuration of $M$ (this is done to ensure that $\reach$ is regular). If $M$ halts or the head goes too far to the right, $\system$ restarts the simulation of $M$ on the empty input while increasing the length by 1. This way, if $M$ loops on the empty input, the run will eventually jump to $s$ and have a chance to terminate. If $M$ does not loop on the empty input, but instead halts or the head goes arbitrarily far, then, with high probability, the configuration of $\system$ will become longer and longer, and the probability of reaching $s$ will decrease, leading to termination with probability $<1$. For the full proof, see Appendix~\ref{app:asterminationnlp}.

\begin{restatable}{proposition}{asterminationnlp}\label{asterminationnlp}
Almost sure termination is undecidable.
\end{restatable}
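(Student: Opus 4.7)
The plan is to reduce from the (undecidable) problem of whether a deterministic Turing machine $M$ loops on the empty input, i.e.\ ever revisits a configuration. I will build an RTS $\system$ together with the uniform probability measure on successors such that $\system$ terminates almost surely if and only if $M$ loops. The configurations of $\system$ encode configurations of $M$ on a tape of some length $k$, padded with blanks, together with a distinguished configuration $s_k$ for each length $k$. From a non-$s$ configuration of length $k$, the transducer offers two kinds of moves: with high probability it faithfully simulates one step of $M$, and with low probability (decreasing in $k$) it jumps to $s_k$. If $M$ halts or its head would cross the right boundary, the corresponding transition instead resets the content to $q_0$ on the empty input, padded to length $k+1$, which is how the tape grows. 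From $s_k$, three kinds of transitions are available: terminate; restart the simulation at the current length $k$; or jump to an arbitrary configuration of $M$ of length $k$. The last family is introduced precisely so that $\reach$ stays regular: from any reachable configuration one reaches every configuration of length at least $k$ in at most two steps, so $\reach$ admits a transducer of polynomial size.

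Correctness is then checked in the two standard directions using Lemma~\ref{asterminationlemma}. If $M$ loops on the empty input, then the simulation remains forever on configurations of some bounded length $k_0$; hence the jump-to-$s$ probability stays bounded below by $1/k_0$, $s_{k_0}$ is visited infinitely often almost surely, and the fixed positive probability of terminating at each visit forces almost-sure termination. If $M$ does not loop, then every restart strictly grows the tape (because $M$ either halts or eventually uses more cells), so with high probability the length tends to infinity; by tuning the jump-to-$s$ probabilities $1/k$ and the out-probabilities at $s_k$ so that the per-length failure probabilities are summable, the probability of ever terminating is strictly less than $1$, and Lemma~\ref{asterminationlemma} fails.

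The main obstacle is the tension between the two roles of the $s_k$ states: they must make $\reach$ regular via the ``jump anywhere'' transitions, and they must act as the sole termination gateway. One must calibrate the probability schedule so that (i) each visit to $s_k$ yields a constant positive chance of terminating, which is needed for the looping case, yet (ii) the ``jump anywhere'' and ``restart'' transitions at $s_k$ do not drag the run back to short tapes often enough to force infinitely many visits to some $s_k$ in the non-looping case. The delicate point is therefore to choose the probabilities (which, since the measure is uniform on successors, amounts to choosing how many parallel copies of each transition to include, in a manner that depends only on the current length $k$ and is therefore realisable by a regular transducer) so that the expected total probability of visiting the set $\bigcup_k \{s_k\}$ along a length-growing run is strictly below $1$, while the per-visit termination probability stays bounded away from $0$.
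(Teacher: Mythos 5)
Your overall architecture matches the paper's (reduction from ``does $M$ loop on the empty input'', a simulation that grows the tape on halting or running off the right end, per-length gateway configurations $s_k$ that both terminate and ``jump anywhere'' to keep $\reach$ regular), but the proposal has two genuine gaps. First, you cannot invoke Lemma~\ref{asterminationlemma}: that characterisation is proved only for length-preserving RTSs, and your construction is not length-preserving. Worse, in your construction every configuration can reach some $s_k$ and hence a terminating configuration, so $\reach(\initial) \setminus \reach^{-1}(T) = \emptyset$ holds \emph{unconditionally} --- the lemma's condition does not ``fail'' in the non-looping case, which is exactly why the general case must be argued probabilistically rather than topologically. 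The correctness argument has to bound actual probabilities, as the paper does.

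Second, the quantitative core is missing. Under the uniform measure $\prob(c,c') = 1/|\trafun(c)|$, ``parallel copies of a transition'' do not change anything (the measure is uniform over successor \emph{configurations}, not over transitions with multiplicity), and a configuration of length $k$ has at most exponentially many successors, so no single-step choice can have probability smaller than roughly $|\Sigma'|^{-k}$. More importantly, your schedule of $1/k$ for the jump-to-$s_k$ probability is far too large: if $M$ does not loop, the simulation may spend up to about $|Q \cup \Gamma|^k$ steps at length $k$ before growing, so the probability of hitting $s_k$ at that length is about $1 - (1-1/k)^{|Q\cup\Gamma|^k}$, i.e.\ essentially $1$, and the per-length visit probabilities are nowhere near summable --- the run would then terminate almost surely even when $M$ does not loop. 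You need the probability of reaching $s$ per simulated $M$-step to be below $|Q\cup\Gamma|^{-k}$. The paper achieves this with a gadget you lack: each configuration carries a binary counter of length $k$, reaching $s$ requires incrementing it from $0$ to $2^k-1$ without ever taking the competing ``simulate $M$ and reset'' move, so the per-$M$-step escape probability is about $2^{-2^k}$, and $\sum_k |Q\cup\Gamma|^k 2^{-2^k}$ converges. Without some such multiplicative-accumulation mechanism, the non-looping direction of your reduction fails.
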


\begin{corollary}
Almost sure reachability and almost sure recurrent reachability are undecidable.
\end{corollary}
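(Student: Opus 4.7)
The plan is to prove the corollary by reducing almost sure termination, which is undecidable by Proposition~\ref{asterminationnlp}, to each of the two problems. In both reductions the underlying RTS is essentially unchanged, so it will be straightforward to check that the uniform-probability assumption used in Proposition~\ref{asterminationnlp} is preserved and that all required components of the input are recursively computable.

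For almost sure reachability, the key observation is that a run of $\system$ terminates if and only if it visits the set $T$ of configurations without a successor. Hence $\system$ almost surely terminates if and only if $\system$ almost surely reaches $T$. Since $T = \overline{\trafun^{-1}(\configurations)}$, an NFA for $T$ is recursively computable from the given transducer for $\trafun$ (the complementation is exponential, but this is irrelevant for a reduction that need only be recursive). Taking $L := T$ and leaving $\system$ and the given transducer for $\reach$ unchanged yields an equivalent instance of almost sure reachability.

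For almost sure recurrent reachability, I would modify $\system$ into $\system'$ by adding a self-loop at every configuration in $T$. This leaves the reachability relation unchanged (it is already reflexive), so the given transducer for $\reach$ can be reused as a transducer for $\reach'$. A transducer for the new transition relation is obtained as the union of the given one with a transducer recognising $\{(c,c) \mid c \in T\}$, which is computable from the NFA for $T$. Under the uniform measure, the new transitions give each configuration in $T$ exactly one outgoing transition, of probability $1$, while configurations outside $T$ are unaffected. Setting $L := T$, a run of $\system'$ behaves exactly like the corresponding run of $\system$ until it reaches $T$, after which it stays in $T$ forever; hence the runs of $\system'$ that visit $L$ infinitely often correspond bijectively, and with matching probabilities, to the runs of $\system$ that terminate. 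Thus $\system$ almost surely terminates if and only if $\system'$ almost surely recurrently reaches $L$.

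The main subtlety, which is mild, is to check that both reductions respect the uniform probability measure assumed in Proposition~\ref{asterminationnlp}: in each case the branching structure at configurations outside $T$ is untouched, and the only new transitions are deterministic self-loops at terminating configurations, so the uniform measure is preserved where it was already defined and extended in the only possible way elsewhere.
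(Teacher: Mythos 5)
Your proposal is correct and follows essentially the same route as the paper: both reduce from almost sure termination by turning the terminating configurations into an absorbing target set, so that reaching (respectively, recurrently reaching) that set coincides with terminating. The only cosmetic difference is that the paper introduces a single fresh sink configuration $a$ with $L=\{a\}$ to handle both problems at once, whereas you take $L:=T$ directly and add self-loops on $T$ for the recurrent case; both variants preserve the regularity of $\reach$ and the uniform measure as required.
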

\begin{proof}
One can reduce from almost sure termination by adding a new configuration $a$, setting $L := \{a\}$, and adding transitions $(a,a)$ and $(t,a)$ for every $t \in T$.
\end{proof}

\section{Sure and almost-sure properties in regular abstraction frameworks}
As mentioned in the introduction, while some formalisms that can be modeled as RTSs have an effectively computable regular reachability relation, many others do not, including lossy channel systems, Vector Addition Systems and their many extensions, broadcast protocols, and population protocols. On the other hand, there exist different techniques to effectively compute a regular \emph{overapproximation} of $\reach$, which in this section we generically call \emph{potential reachability} and denote $\potreach$.
We consider the particular case of \emph{regular abstraction frameworks}, a recent technique for the computation of $\potreach$~\cite{CzernerEKW24} introduced by the authors and colleagues. The technique automatically computes an overapproximation $\potreach$  for any \emph{regular abstraction} representable---in a certain technical sense--- by a transducer, and the computational complexity of the automatic procedure is precisely known (see Section~\ref{subsec:raf} below).

The section is structured as follows. Section~\ref{subsec:raf} recalls the main notions and results of~\cite{CzernerEKW24}. Sections~\ref{subsec:rafsure} and Sections~\ref{subsec:rafalmostsure} study sure and almost-sure properties, respectively.

\subsection{Regular abstraction frameworks}
\label{subsec:raf}
We briefly recall the main idea behind regular abstraction frameworks. Recall that the configurations of an RTS $\system$ are finite words over an alphabet $\Sigma$. A regular abstraction framework introduces a second alphabet $\Gamma$ and a deterministic transducer $\interpretation$ over $\Gamma \times \Sigma$, called an \emph{interpretation}. Words over $\Gamma$ are called \emph{constraints}. Intuitively, words over $\Gamma$ are ``identifiers'' for sets of configurations, and $\interpretation$ ``interprets'' an identifier $w \in \Gamma^*$ as the set of configurations $\interpretation(w):= \{w\} \circ \Relation{\interpretation}$.  A configuration $c$ satisfies a constraint $w$ if $c \in \interpretation(w)$.

A constraint $w \in \Gamma^*$ is \emph{inductive} if $\interpretation(w) \circ \trafun \subseteq \interpretation(w)$. Given an inductive constraint $w \in \Gamma^*$, we have $\interpretation(w) \circ \reach \subseteq  \interpretation(w)$, i.e., $\interpretation(w)$ is closed under the reachability relation. An inductive constraint $w$ \emph{separates} two configurations $c, c'$ if $c \in \interpretation(w)$ and $c' \notin \interpretation(w)$. Observe that if $w$ separates $(c, c')$, then $c'$ is not reachable from $c$. This leads to the definition of the \emph{potential reachability relation induced by $\interpretation$}, denoted $\potreach$, as the set of all pairs $(c, c')$ that are \emph{not} separated by any inductive constraint. Formally:
\[ \potreach = \{ (c, c') \in \configurations \times \configurations \mid \forall w \in \Gamma^* \colon \mbox{$w$ is not inductive or does not separate $(c, c')$} \} \]

The following theorem was shown in~\cite{CzernerEKW24}.

\begin{theorem}
\label{thm:abs}
Let $\system=(\initial,\trafun)$ be an RTS and let $\interpretation$ be an interpretation.
\begin{enumerate}
\item $\potreach$ is reflexive, transitive, and regular. Further, one can compute in exponential time a transducer recognizing $\overline{\potreach}$ with at most $n_\interpretation^2 \cdot 2^{n_\Delta \cdot n_\interpretation^2}$ states.
\item Given an NFA for $L$, deciding whether some configuration in $L$ is potentially reachable from some initial configuration (formally: whether $\potreach(\initial) \cap L = \emptyset$ holds) is \EXPSPACE-complete.
\end{enumerate}
\end{theorem}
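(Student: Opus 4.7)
For Part 1, reflexivity and transitivity of $\potreach$ follow directly from the definitions: no inductive constraint can separate a configuration from itself (reflexivity), and if an inductive $w$ separated $c_1$ from $c_3$, a case split on whether $c_2 \in \interpretation(w)$ would contradict either $c_1 \potreach c_2$ or $c_2 \potreach c_3$ (transitivity). For regularity, the plan is to first describe the set $\indcons \subseteq \Gamma^*$ of inductive constraints as a regular language, and then build a transducer for $\overline{\potreach}$ that, on input $(c,c')$, guesses a witnessing $w \in \indcons$ that separates the pair.

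Concretely, I would first construct an NFA of polynomial size for $\overline{\indcons}$: on input $w$, it guesses alongside $w$ a pair $(c_1,c_2) \in \Relation{\trafun}$ and simulates a product of two copies of $\interpretation$ (reading $(w,c_1)$ and $(w,c_2)$) and the $\trafun$-transducer (reading $(c_1,c_2)$), accepting iff $c_1 \in \interpretation(w)$, $c_2 \notin \interpretation(w)$, and $(c_1,c_2) \in \Relation{\trafun}$. Its state space is $Q_\trafun \times Q_\interpretation \times Q_\interpretation$, of size $n_\Delta \cdot n_\interpretation^2$. Determinising via the subset construction yields a DFA for $\indcons$ with at most $2^{n_\Delta \cdot n_\interpretation^2}$ states. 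The transducer for $\overline{\potreach}$ then reads the convolution of $(c,c')$ while guessing $w$ on the side and tracks (i) the $\interpretation$-state on $(w,c)$, (ii) the $\interpretation$-state on $(w,c')$, and (iii) the $\indcons$-DFA state on $w$; it accepts iff (i) is final, (ii) is non-final, and (iii) is final. This yields the claimed bound $n_\interpretation^2 \cdot 2^{n_\Delta \cdot n_\interpretation^2}$, with padding handled by the standard multi-convolution machinery, and the whole construction runs in exponential time.

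For Part 2, membership in $\EXPSPACE$ can be established on-the-fly: $\potreach(\initial) \cap L \neq \emptyset$ iff there exist $c \in \initial$ and $c' \in L$ with $(c,c') \notin \overline{\potreach}$, so we guess $c$ and $c'$ symbol by symbol, simulate the polynomial NFAs for $\initial$ and $L$, and maintain the full set of reachable states of the exponential transducer for $\overline{\potreach}$ (storable in exponential space); at the end, we accept iff both small NFAs are accepting and the reachable-set contains no accepting state. This gives an $\mathsf{NEXPSPACE}$ algorithm, so membership follows by Savitch. For $\EXPSPACE$-hardness, the plan is to reduce from acceptance by an exponentially-space-bounded deterministic Turing machine on empty input, encoding configurations of $M$ in an RTS and using prime-based position markers (in the spirit of the proof of Proposition~\ref{astermination}) so that exponentially many tape cells can be addressed with polynomially many states. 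The main obstacle — and the genuinely creative step — lies in the completeness direction of this reduction: whenever $M$ does not accept, the framework must produce an inductive constraint $w$ with $\initial \subseteq \interpretation(w)$ and $\interpretation(w) \cap L = \emptyset$. This requires designing $\interpretation$ so that the family of sets $\{\interpretation(w) \mid w \in \Gamma^*\}$ is expressive enough to contain a separating inductive invariant for every non-accepting computation of $M$, which is the technical crux of the lower bound.
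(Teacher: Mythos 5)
First, a point of reference: the paper does not prove Theorem~\ref{thm:abs} at all --- it is imported verbatim from~\cite{CzernerEKW24} (``The following theorem was shown in...''), so there is no in-paper argument to compare against. Judged on its own, your reconstruction of Part~1 is correct and matches the construction in the cited work: reflexivity and transitivity follow exactly by your case split, and the two-stage construction (an NFA of size $n_\Delta \cdot n_\interpretation^2$ for the non-inductive constraints, determinised and complemented to a DFA for $\indcons$, then a product with two copies of the deterministic $\interpretation$ that guesses a separating inductive $w$) yields precisely the stated bound $n_\interpretation^2 \cdot 2^{n_\Delta \cdot n_\interpretation^2}$. The determinism of $\interpretation$ is what makes the ``$c' \notin \interpretation(w)$'' check implementable by requiring a non-final state, and you correctly rely on it. The \EXPSPACE\ upper bound of Part~2 via on-the-fly subset tracking of the exponential transducer for $\overline{\potreach}$ is likewise fine.

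The genuine gap is the \EXPSPACE-hardness in Part~2. What you offer is a plan, not a proof, and you say so yourself: the reduction from an exponentially space-bounded TM requires exhibiting an interpretation $\interpretation$ such that (i) if $M$ accepts, no inductive constraint separates $\initial$ from $L$, and (ii) if $M$ does not accept, some inductive constraint does separate them. Condition (ii) is not a soundness afterthought --- it is the entire content of the lower bound, because $\potreach$ is defined negatively (absence of a separating inductive invariant), so a reduction is worthless unless the abstract domain provably contains a separator exactly in the negative instances. Prime-based position markers address how to \emph{check} consistency of adjacent TM configurations with polynomially many states, but they do not by themselves tell you what the constraints $\interpretation(w)$ should denote so that the set of reachable (encodings of) TM configurations is itself expressible as $\interpretation(w)$ for some $w$, and inductive. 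Until that interpretation is constructed and both directions verified, the hardness claim is unsupported; you would either have to carry out this construction (as done in~\cite{CzernerEKW24}) or explicitly cite that paper for the lower bound, as the present paper does.
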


The problem of Theorem~\ref{thm:abs}.2 is called \textsc{Abstract Safety} in~\cite{CzernerEKW24}. We state a corollary.

\begin{proposition}
Abstract deadlock-freedom is \EXPSPACE-complete.
\end{proposition}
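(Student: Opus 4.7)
The plan is to establish membership in $\EXPSPACE$ and $\EXPSPACE$-hardness separately. For membership, note that abstract deadlock-freedom fails iff $\potreach(\initial) \cap \overline{\trafun^{-1}(\configurations)} \neq \emptyset$, exactly as in Theorem~\ref{terminating}. By Theorem~\ref{thm:abs}.1 we construct an NFA for $\overline{\potreach}$ of size $2^{O(n)}$ in exponential time, and the NFA for $\trafun^{-1}(\configurations)$ obtained by projecting the transducer for $\trafun$ is polynomial. The algorithm nondeterministically guesses the convolution $\vtuple{c}{c'}$ letter by letter while maintaining three pieces of state: (i)~the state of the NFA for $\initial$ on the $c$-component (polynomial space), (ii)~the subset state of the polynomial NFA for $\trafun^{-1}(\configurations)$ on the $c'$-component, certifying $c' \notin \trafun^{-1}(\configurations)$ (polynomial space), and (iii)~the subset state of the exponential NFA for $\overline{\potreach}$ on the convolution, certifying $(c,c') \notin \overline{\potreach}$ (exponential space). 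A witness, if any, has length bounded by the doubly-exponential product of the three state spaces and can be tracked by an exponential-bit counter. The total space usage is exponential, so the problem lies in $\mathsf{NEXPSPACE} = \EXPSPACE$.

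For hardness we reduce from abstract safety, which is $\EXPSPACE$-hard by Theorem~\ref{thm:abs}.2. Given an instance $(\system, \interpretation, L)$ with $\system$ over $\Sigma$, let $\Sigma' := \Sigma \cup \{\dagger\}$ for a fresh marker $\dagger$, set $\initial' := \initial$, and define
\[ \trafun' := \trafun \,\cup\, \{(c, c\dagger) : c \in L\} \,\cup\, \{(c,c) : c \in (\Sigma')^* \setminus L\cdot\{\dagger\}\}. \]
Configurations in $L\cdot\{\dagger\}$ have no $\trafun'$-successors while every other configuration carries a self-loop, so the deadlocks of $\system'$ are precisely $L\cdot\{\dagger\}$. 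Define $\interpretation'$ by augmenting the DFA for $\interpretation$ with a fresh accepting state reachable from each original accepting state via $\vtuple{\#}{\dagger}$, so that $\interpretation'(w) = \interpretation(w) \cup \{c\dagger : c \in \interpretation(w)\}$ for every constraint $w$. Both $\system'$ and $\interpretation'$ are constructible in polynomial time.

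The correctness claim is that $c^* \in \potreach_\system(\initial)$ iff $c^*\dagger \in \potreach_{\system'}(\initial)$ for every $c^* \in L$, which yields $\potreach_{\system'}(\initial) \cap L\cdot\{\dagger\} = \emptyset$ iff $\potreach_\system(\initial) \cap L = \emptyset$. The main obstacle is to prove that a constraint $w$ is inductive for $\system$ iff it is inductive for $\system'$. The ``$\Rightarrow$'' direction is straightforward: the self-loops are trivially closed under any set, and the trap transitions $c \to c\dagger$ with $c \in L \cap \interpretation(w)$ satisfy $c\dagger \in \interpretation'(w)$ by construction of $\interpretation'$. For ``$\Leftarrow$'', using $\trafun \subseteq \trafun'$ we get $\interpretation'(w) \circ \trafun \subseteq \interpretation'(w)$; since $\interpretation'(w) \cap \Sigma^* = \interpretation(w)$ and $\trafun$ only maps $\Sigma^*$ to $\Sigma^*$, this restricts to $\interpretation(w) \circ \trafun \subseteq \interpretation(w)$. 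Combined with $\initial \subseteq \interpretation(w)$ iff $\initial \subseteq \interpretation'(w)$ (as $\initial \subseteq \Sigma^*$) and $c^* \in \interpretation(w)$ iff $c^*\dagger \in \interpretation'(w)$, the equivalence follows.
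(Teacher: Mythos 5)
Your membership argument is essentially the paper's: guess a configuration (here, a convoluted pair) letter by letter, verify $c_0 \in \initial$, $(c_0,c)\in\potreach$ and $c\in T$ on the fly against automata of at most exponential size, and invoke $\mathsf{NEXPSPACE}=\EXPSPACE$. That part is fine and differs from the paper only in bookkeeping.

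The hardness reduction, however, has a genuine gap: it is not computable in polynomial time. Your transducer for $\trafun'$ contains the component $\{(c,c) \mid c \in (\Sigma')^* \setminus L\cdot\{\dagger\}\}$, and constructing it requires an NFA for the complement of $L$; since \textsc{Abstract Safety} provides $L$ as an NFA, this complementation can cost an exponential blowup, and an exponential-time reduction does not establish \EXPSPACE-hardness. The complement is not an incidental detail of your construction, either: if you instead put self-loops on everything outside $\Sigma^*\{\dagger\}$, then every $c\dagger$ with $c \in \Sigma^*$ becomes a deadlock, and your own correspondence of separators gives $c\dagger \in \potreach_{\system'}(\initial)$ iff $c \in \potreach_{\system}(\initial)$ for \emph{all} $c \in \Sigma^*$, not just $c \in L$; so abstract deadlock-freedom would fail as soon as anything at all is potentially reachable. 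The paper avoids the issue by sending every configuration of $L$ to a \emph{single} fresh sink $t$ (one per length in the length-preserving case) and adding self-loops to everything except $t$: the complement of $\{t\}$ is recognized by a constant-size automaton, so $L$ is never complemented. Your reduction could be salvaged by assuming the hard instances of \textsc{Abstract Safety} come with a DFA for $L$, but that is not part of the statement you reduce from. A second, smaller issue: extending the interpretation by a transition $\vtuple{\#}{\dagger}$ out of accepting states only captures pairs $(\varphi, c\dagger)$ with $|\varphi|\le|c|$, so constraints longer than the configuration are mishandled by the padding convention; this needs a word of care in the non-length-preserving case.
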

\begin{proof}
Abstract deadlock-freedom is the problem whether $\potreach(\initial) \cap T = \emptyset$. A nondeterministic algorithm can guess a configuration $c \in \configurations$ and check that $c \notin \overline{\potreach(\initial)}$ and $c \notin \overline{T}$. Since one can construct an exponential-sized NFA for $\overline{\potreach(\initial)}$ and a polynomial-sized NFA for $\overline{T}$, this algorithm needs exponential space.

For hardness, we reduce from \textsc{Abstract Safety}. Given an RTS $\system$ and a set $L$ of unsafe configurations, we construct $\system'$ by adding a new configuration $t$ (in the length-preserving case, adding such a configuration for each length), adding self-loops to all configurations except $t$, adding transitions $(c,t)$ for every $c \in L$, and adding $t$ to all constraints (i.e.\ for every constraint $\varphi$, we have $(\varphi, t) \in \interpretation$). This makes $t$ the only terminating configuration while not affecting the inductivity of any constraints. It is easy to see that abstract safety holds for $\system$ iff abstract deadlock-freedom holds for $\system'$.
\end{proof}

\subsection{Sure properties}
\label{subsec:rafsure}
We define \textsc{Abstract Liveness}, obtained by substituting  ``recurrent reachability'' for ``reachability'' in \textsc{Abstract Safety}.

\begin{definition}
\label{def:potrun}
Let $\system=(\initial,\trafun)$ be an RTS and let $\interpretation$ be an interpretation. A \emph{potential run} of $\system$ is a run of the RTS $(\initial,\trafun')$ with $\trafun' := (\potreach \setminus \Id) \cup \trafun$. The \emph{abstract liveness} problem is defined as follows:
\begin{quote}
\textsc{Abstract Liveness}\\
Given: RTS $\system=(\initial,\trafun)$, interpretation $\interpretation$, NFA $A$ for $L$. \\
Decide: Does some potential run of $\system$ visit $L$ infinitely often?
\end{quote}
\end{definition}

\begin{restatable}{lemma}{closures}\label{lem:closures}
The reflexive transitive closure of $\trafun'$ is $\potreach$. The transitive closure of $\trafun'$ is $\trafun' \cup (\potreach \setminus \Id)^2$.
\end{restatable}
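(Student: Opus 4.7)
The plan rests on two easy observations. First, $\potreach$ is reflexive and transitive by Theorem~\ref{thm:abs}. Second, $\trafun \subseteq \potreach$: if $w$ is inductive and $c \in \interpretation(w)$, then $\trafun(c) \subseteq \interpretation(w)$ by the definition of inductivity, so $w$ cannot separate any pair in $\trafun$. Combining these yields $\trafun' = (\potreach \setminus \Id) \cup \trafun \subseteq \potreach$, which drives every upper-bound argument below.

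For the first equality I will show both inclusions. For ``$\supseteq$,'' reflexivity lets me write $\potreach = \Id \cup (\potreach \setminus \Id) \subseteq \Id \cup \trafun'$, and the right-hand side is trivially contained in the reflexive transitive closure of $\trafun'$. For ``$\subseteq$,'' since $\trafun' \subseteq \potreach$ and $\potreach$ is already reflexive and transitive, the reflexive transitive closure of $\trafun'$ cannot exceed $\potreach$.

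For the second equality, ``$\supseteq$'' is immediate: $\trafun'$ sits in its own transitive closure, and since $\potreach \setminus \Id \subseteq \trafun'$ we also have $(\potreach \setminus \Id)^2 \subseteq (\trafun')^2$, which sits inside too. For ``$\subseteq$,'' I will verify $(\trafun')^n \subseteq \trafun' \cup (\potreach \setminus \Id)^2$ for every $n \geq 1$ by a direct case analysis on a walk $c_0, c_1, \ldots, c_n$ whose consecutive pairs lie in $\trafun'$; note that $(c_i, c_j) \in \potreach$ whenever $i \leq j$, by $\trafun' \subseteq \potreach$ and transitivity. If every $c_i$ equals $c_0$, the walk consists entirely of the same self-loop, so $(c_0, c_n) \in \trafun'$. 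Otherwise I pick the smallest $i^*$ with $c_{i^*} \neq c_0$, which gives $(c_0, c_{i^*}) \in \potreach \setminus \Id$; if additionally $c_{i^*} \neq c_n$ then $(c_{i^*}, c_n) \in \potreach \setminus \Id$ and $(c_0, c_n) \in (\potreach \setminus \Id)^2$, while if $c_{i^*} = c_n$ then $(c_0, c_n) = (c_0, c_{i^*}) \in \potreach \setminus \Id \subseteq \trafun'$. The only subtle point is exactly this case split: self-loops and walks that ``collapse'' back prevent the naive identity ``transitive closure of $\trafun'$ equals $(\potreach \setminus \Id)^2$,'' and splitting at the first index where the walk leaves $c_0$ is what guarantees that each of the two factors is genuinely off-diagonal.
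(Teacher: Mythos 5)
Your proof is correct, and it rests on exactly the same ingredients as the paper's: $\trafun \subseteq \potreach$ (which you, unlike the paper, actually justify from inductivity), reflexivity and transitivity of $\potreach$, and a case analysis on which configurations along a $\trafun'$-walk coincide. The first part is argued identically. In the second part your decomposition differs: the paper computes $(\trafun')^2$, shows $(\trafun')^3 \subseteq (\trafun')^2$, and concludes by stabilization of powers that the transitive closure is $\trafun' \cup (\trafun')^2$; you instead prove $(\trafun')^n \subseteq \trafun' \cup (\potreach \setminus \Id)^2$ uniformly for every $n$ by cutting the walk $c_0,\dots,c_n$ at the first index $i^*$ with $c_{i^*} \neq c_0$, so that transitivity of $\potreach$ delivers two genuinely off-diagonal factors $(c_0,c_{i^*})$ and $(c_{i^*},c_n)$, with the degenerate cases (the walk never leaves $c_0$, or collapses back so that $c_{i^*}=c_n$) falling into $\trafun'$. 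Your version buys a small amount of robustness: you only ever prove the inclusions you need, whereas the paper's intermediate assertion that $(\trafun')^2$ \emph{equals} $\trafun' \cup (\potreach \setminus \Id)^2$ tacitly presupposes $\trafun' \subseteq (\trafun')^2$, which need not hold (though this does not affect the final statement, since $\trafun'$ is in the transitive closure anyway). The paper's version buys a shorter write-up, since only walks of length $2$ and $3$ are ever inspected. Both arguments are sound and interchangeable.
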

\begin{proof}
See Appendix~\ref{app:closures}.
\end{proof}

The first statement of Lemma~\ref{lem:closures} shows that our definition of a potential run coincides with the definition of abstract safety in~\cite{CzernerEKW24}: Abstract safety holds iff no potential run reaches $L$. The second statement will be used to characterise abstract liveness.

In the rest of the section, we show that \textsc{Abstract Liveness} and abstract sure termination, which can be seen as a special case of \textsc{Abstract Liveness}, is \EXPSPACE-complete. We start with a lemma, similar to the starting point of To and Libkin in~\cite{ToL08}.

\begin{lemma}
\label{lem:aandb}
Let $\system=(\initial,\trafun)$, $\interpretation$, $A$ be an instance of \textsc{Abstract Liveness}. Let $\trafun'$ be as in Definition~\ref{def:potrun}. There exists a potential run of $\system$ that visits $\Language{A}$ infinitely often if and only if one of these conditions hold:
\begin{enumerate}[(a)]
	\item There exist configurations $c_0$, $c$ such that $c_0 \in \initial$, $c \in L$, $(c_0,c) \in \potreach$, and $(c,c) \in \trafun \cup (\potreach \setminus \Id)^2$.
	\item There exists a sequence $(c_i)_{i \in \N_0}$ of pairwise distinct configurations such that $c_0 \in \initial$, $\forall i \in \N: c_i \in L$, and $\forall i \in \N_0: (c_i,c_{i+1}) \in \potreach$.
	\end{enumerate}
\end{lemma}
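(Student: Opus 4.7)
The plan is to prove both directions of the biconditional, using Lemma~\ref{lem:closures} to move between membership in $\potreach$ and existence of $\trafun'$-paths. Throughout, let $\trafun'^+$ denote the transitive closure of $\trafun'$, which by Lemma~\ref{lem:closures} equals $\trafun' \cup (\potreach \setminus \Id)^2 = \trafun \cup (\potreach \setminus \Id) \cup (\potreach \setminus \Id)^2$.

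For the $(\Leftarrow)$ direction, I assume (a) or (b) and build the required potential run. Under (a), $(c_0, c) \in \potreach$ yields a (possibly empty) $\trafun'$-path from $c_0$ to $c$. The hypothesis $(c,c) \in \trafun \cup (\potreach \setminus \Id)^2$ puts $(c,c)$ in $\trafun'^+$, and, since $(c,c) \notin \potreach \setminus \Id$, this gives a nontrivial $\trafun'$-loop at $c$; iterating the loop produces an infinite potential run visiting $c \in L$ infinitely often. Under (b), pairwise distinctness forces $c_i \neq c_{i+1}$, so $(c_i, c_{i+1}) \in \potreach \setminus \Id \subseteq \trafun'^+$, giving a nonempty $\trafun'$-path from $c_i$ to $c_{i+1}$ for every $i \geq 0$; splicing these paths yields a potential run visiting the pairwise distinct configurations $c_1, c_2, \ldots \in L$, hence $L$ infinitely often.

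For the $(\Rightarrow)$ direction, let $\pi = (d_0, d_1, d_2, \ldots)$ be a potential run with $d_0 \in \initial$ that visits $L$ infinitely often, and split into two cases. If some $c \in L$ occurs in $\pi$ infinitely often, pick two occurrences $d_i = d_j = c$ with $i < j$; the segment $d_i \cdots d_j$ is a $\trafun'$-path of length $\geq 1$, so $(c,c) \in \trafun'^+$, and the same elimination of $\potreach \setminus \Id$ as above yields $(c,c) \in \trafun \cup (\potreach \setminus \Id)^2$. Since $\pi$ also witnesses $(d_0, c) \in \potreach$ (reflexive transitive closure of $\trafun'$ by Lemma~\ref{lem:closures}), condition (a) holds with $c_0 := d_0$. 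Otherwise every configuration in $L$ appears only finitely often in $\pi$; since $\pi$ visits $L$ infinitely often, infinitely many distinct configurations of $L$ appear, so I extract a subsequence $c_1, c_2, \ldots$ of pairwise distinct elements of $L$ visited in the order of their first appearance and set $c_0 := d_0$. Each pair $(c_i, c_{i+1})$ is linked by a finite segment of $\pi$ through $\trafun'$, hence $(c_i, c_{i+1}) \in \potreach$, giving (b).

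The only real subtlety is ensuring pairwise distinctness of the full sequence in case~2: if $c_0 = d_0 \in L$ then by the standing assumption $c_0$ appears only finitely often in $\pi$, so I drop finitely many initial terms of $c_1, c_2, \ldots$ to start strictly after the last occurrence of $c_0$; if $c_0 \notin L$ then distinctness from the $c_i$ ($i \geq 1$), all of which lie in $L$, is automatic. Beyond that, the proof is bookkeeping with Lemma~\ref{lem:closures} to identify the correct closure of $\trafun'$ and to recognize when a pair is forced out of $\Id$.
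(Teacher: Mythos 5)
Your proof is correct and follows essentially the same route as the paper's: apply Lemma~\ref{lem:closures} to identify $\potreach$ and $\trafun'\cup(\potreach\setminus\Id)^2$ as the reflexive-transitive and transitive closures of $\trafun'$, build the run directly from (a) or (b), and for the converse split on whether some $L$-configuration repeats (yielding (a) via the $\Id$-elimination) or the $L$-visits contain infinitely many distinct configurations (yielding (b)). Your extra care about the pairwise distinctness of $c_0$ from the later $c_i$ in case (b) is a detail the paper glosses over, but it does not change the argument.
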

\begin{proof}
We use Lemma~\ref{lem:closures}. If (a) holds, then the infinite sequence $c_0, c, c, \cdots$ is a potential run. If (b) holds, then $(c_i)_{i \in \N_0}$ is a potential run. For the converse, let $(c_i)_{i \in \N_0}$ be a potential run that visits $L$ infinitely often. If $c_i = c_j \in L$ for some $i < j$, then $(c_i,c_i) \in \trafun' \cup (\potreach \setminus \Id)^2$, and since $(c_i,c_i) \in \Id$, (a) holds. Otherwise, removing all configurations not in $L$ from the sequence yields a sequence of pairwise distinct configurations in $L$, and (b) holds.
\end{proof}

The sequence in condition (b) is called an \emph{infinite directed clique}, see~\cite{BergstrasserGLZ22}. Note that (b) never holds for length-preserving RTSs. We prove that both (a) and (b) can be decided in \EXPSPACE.

\begin{lemma}\label{abstractinfinitereach}
Deciding (a) is in \EXPSPACE.
\end{lemma}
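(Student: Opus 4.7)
The plan is to exhibit a nondeterministic exponential-space procedure and appeal to Savitch's theorem. The key ingredient is Theorem~\ref{thm:abs}: from the input we can construct an NFA $B$ for $\overline{\potreach}$ with at most $n_\interpretation^2 \cdot 2^{n_\Delta \cdot n_\interpretation^2}$ states, i.e., exponential size in the input. A single state of $B$ fits in polynomial space, so the subset construction yields a DFA $D$ for $\overline{\potreach}$ whose current state (a subset of $B$'s states) can be stored in exponential space; flipping its accepting status gives a deterministic device for $\potreach$. Every other automaton or transducer needed for condition (a)---the NFA for $\initial$, the NFA for $L$, and the transducer for $\trafun$---has polynomial size.

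The algorithm first nondeterministically guesses which disjunct of $(c,c) \in \trafun \cup (\potreach\setminus\Id)^2$ will be witnessed. In the $\trafun$-case the witnesses are $c_0$ and $c$; in the other case the algorithm additionally guesses some $c'$ and must certify $c \neq c'$, $(c,c')\in\potreach$, and $(c',c)\in\potreach$. The configurations are produced symbol by symbol, aligned in convoluted form for the various transducers and padded with $\#$ as soon as one of them ends. In parallel, the algorithm maintains the current states of: the NFA for $\initial$ reading $c_0$, the NFA for $L$ reading $c$, the (subset-construction) device for $\potreach$ reading $\vtuple{c_0}{c}$, and---depending on the case---either the transducer for $\trafun$ reading $\vtuple{c}{c}$, or two copies of the device for $\potreach$ reading $\vtuple{c}{c'}$ and $\vtuple{c'}{c}$ together with a one-bit flag that records whether a position witnessing $c \neq c'$ has been seen. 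The algorithm nondeterministically halts after any symbol and accepts iff every maintained automaton is in an accepting state and, where relevant, the disequality flag is set.

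Only a constant number of automaton states are tracked and each fits in at most exponential space, so the total space usage is exponential. The correctness of the procedure is immediate from Lemma~\ref{lem:aandb} and the construction. Hence (a) is in NEXPSPACE, and by Savitch's theorem in EXPSPACE. The main obstacle, and the reason the bound is exponential rather than polynomial, is that we are given $\overline{\potreach}$ only as a \emph{non}deterministic device: verifying the \emph{positive} membership $(c_0,c) \in \potreach$ on a symbol-by-symbol guessed input forces us to simulate the subset construction of $B$, whose current state needs exponential space.
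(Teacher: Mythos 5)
Your proof is correct and follows essentially the same route as the paper: guess $c$ (and, for the $(\potreach\setminus\Id)^2$ case, $c'$) letter by letter, certify the positive $\potreach$-memberships by running an on-the-fly subset construction on the exponential-size transducer for $\overline{\potreach}$ from Theorem~\ref{thm:abs}, and conclude via $\mathsf{NEXPSPACE}=\EXPSPACE$. The only cosmetic difference is that you guess the initial configuration $c_0$ explicitly and check $(c_0,c)\in\potreach$, where the paper instead checks $c\notin\overline{\potreach(\initial)}$ against a precomputed exponential NFA.
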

\begin{proof}
One can guess a configuration $c \in \configurations$ letter by letter and check that $c \in L$, $c \notin \overline{\potreach(\initial)}$ and $(c,c) \in \trafun \cup (\potreach \setminus \Id)^2$. $(c,c) \in (\potreach \setminus \Id)^2$ is equivalent to there existing a $c' \in \configurations$ such that $c' \neq c$ and $(c,c'), (c',c) \notin \overline \potreach$. Since an NFA for $\overline{\potreach(\initial)}$ can be built using exponential space (see Theorem~\ref{thm:abs}), the algorithm amounts to guessing $c, c'$ letter by letter (in parallel) and running words on NFAs of at most exponential size. To do this, one can memorize the current set of states after each letter and check whether that set contains a final state of the NFA at the end. Since all involved NFAs are at most exponential, this algorithm takes exponential space, and the result follows from $\mathsf{NEXPSPACE}=\EXPSPACE$.
\end{proof}

For (b), we first need the following lemma, based on Lemma~4 from~\cite{ToL08}.
\begin{lemma}\label{alphabeta}
Condition (b) holds if and only if there exist sequences $(\alpha_i)_{i \in \N_0}$ and $(\beta_i)_{i \in \N_0}$ in $\configurations$ such that
\begin{enumerate}
    \item $\alpha_0 \in \initial$ and $|\alpha_i| > 0$ for every $i \in \N$;
    \item $|\alpha_i| = |\beta_i|$ for every $i \in \N_0$;
    \item there exists an infinite run $r$ of $A$ on $\beta_0\beta_1\cdots$ such that, for every $i \in \N_0$, $\alpha_{i+1}$ is accepted from the state reached by $r$ on the prefix $\beta_0\cdots\beta_i$,
    \item there exists an infinite run $r'$ of the transducer for $\potreach$ on $(\beta_0\beta_1\cdots, \beta_0\beta_1\cdots)$ such that $\forall i \in \N_0: (\alpha_i, \beta_i\alpha_{i+1})$ is accepted from the state reached by $r'$ on the prefix $(\beta_0\cdots\beta_{i-1}, \beta_0\cdots\beta_{i-1})$.
\end{enumerate}
\end{lemma}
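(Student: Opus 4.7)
The easy direction is a direct verification. Given sequences $(\alpha_i), (\beta_i)$ satisfying (1)--(4), the plan is to define $c_i := \beta_0 \beta_1 \cdots \beta_{i-1} \alpha_i$ for $i \geq 0$ (with $c_0 := \alpha_0$). Conditions (1) and (2) immediately give $c_0 \in \initial$ and $|c_i| - |c_{i-1}| = |\alpha_i| > 0$ for $i \geq 1$, so the $c_i$ are pairwise distinct. Concatenating the prefix of $r$ that reads $\beta_0 \cdots \beta_{i-1}$ with the accepting continuation on $\alpha_i$ promised by (3) yields an accepting run of $A$ on $c_i$, hence $c_i \in L$; analogously, (4) delivers $(c_i, c_{i+1}) \in \potreach$ by concatenating the diagonal prefix of $r'$ with a continuation on $\vtuple{\alpha_i}{\beta_i \alpha_{i+1}}$.

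For the converse, I would start from a sequence $(c_i)_{i \in \N_0}$ realising condition (b). Since only finitely many words have any fixed length and the $c_i$ are pairwise distinct, the lengths $|c_i|$ are unbounded; using transitivity of $\potreach$ (Theorem~\ref{thm:abs}), I pass to a subsequence with $|c_0| < |c_1| < \cdots$, while preserving $c_0 \in \initial$, $c_i \in L$ for $i \geq 1$, and $(c_i, c_j) \in \potreach$ for all $i \leq j$. The set of finite prefixes of the $c_i$ then forms a finitely-branching subtree of $\Sigma^*$ that is infinite, so König's lemma yields an infinite word $w \in \Sigma^\omega$ each of whose finite prefixes is a prefix of infinitely many $c_i$. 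A standard diagonal extraction produces a further subsequence, still denoted $(c_i)$, in which $c_k$ has $w[0..|c_{k-1}|-1]$ as a prefix for every $k \geq 1$. The natural definitions $\alpha_0 := c_0$, $\alpha_i := c_i[|c_{i-1}|..|c_i|-1]$, $\beta_0 := w[0..|c_0|-1]$, $\beta_i := w[|c_{i-1}|..|c_i|-1]$ then immediately ensure (1) and (2), and satisfy $\beta_0\beta_1\cdots = w$.

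The main obstacle will be producing \emph{single} infinite runs $r$ of $A$ on $w$ and $r'$ of the transducer $T$ for $\potreach$ on $\vtuple{w}{w}$ whose checkpoint states simultaneously witness (3) and (4) for every $i$. To that end, I will fix, once and for all, an accepting run of $A$ on each $c_j$ and an accepting run of $T$ on each convolution $\vtuple{c_j}{c_k}$ with $j < k$ (which exists by transitivity). The key observation is that, since $c_j$ agrees with $w$ on its first $|c_{j-1}|$ letters, the first $|c_i|$ symbols of any such convolution for $i < j$ coincide with the diagonal $\vtuple{w[0..|c_i|-1]}{w[0..|c_i|-1]}$. I will therefore colour pairs $(i,j)$ with $i<j$ by the state of $A$ at position $|c_i|$ of its fixed run on $c_j$, and triples $(i,j,k)$ with $i<j<k$ by the state of $T$ at position $|c_i|$ of its fixed run on $\vtuple{c_j}{c_k}$. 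Applying Ramsey's theorem to each colouring in turn yields an infinite subsequence on which both colourings are constant, with uniform values $p_*$ and $q_*$; passing to a subsequence trivially preserves the König prefix property.

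With these uniform checkpoint values in hand, a final application of König's lemma to the tree of finite runs of $A$ on prefixes of $w$ that pass through $p_*$ at each checkpoint $|c_i|$ (respectively, finite runs of $T$ on diagonal prefixes passing through $q_*$ at each checkpoint) will produce the desired infinite runs $r$ and $r'$; the trees are finitely branching, and they are infinite because arbitrarily long approximants exist—extracted from the runs on $c_j$ and on $\vtuple{c_j}{c_k}$ for sufficiently large $j,k$, restricted to their diagonal parts. Finally, instantiating the Ramsey conclusion at the immediate neighbours $j=i+1$ (for $A$) and $j=i+1$, $k=i+2$ (for $T$) shows that from $p_*$ the NFA $A$ accepts $\alpha_{i+1}$ and from $q_*$ the transducer $T$ accepts $\vtuple{\alpha_{i+1}}{\beta_{i+1}\alpha_{i+2}}$, which is precisely (3) and (4). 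The technical heart of the argument, and the step that requires the most care, is verifying that the Ramsey colourings are well-defined on the already König-extracted subsequence and that the subsequent König extraction for $r$ and $r'$ is compatible with the fixed checkpoint states.
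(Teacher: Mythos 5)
Your proposal is correct and, in substance, follows the same route as the paper: the easy direction is the paper's verbatim (set $c_i := \beta_0\cdots\beta_{i-1}\alpha_i$), and for the hard direction the paper merely invokes transitivity of $\potreach$ and points to the proof of Lemma~4 of To and Libkin~\cite{ToL08}; your K\"onig-plus-Ramsey extraction is essentially a self-contained reconstruction of that argument. Two loose ends in the converse direction do need explicit treatment in a full write-up, and they are exactly where you yourself say ``the most care'' is required. First, Ramsey's theorem does not let you insist that the index $0$ survives into the monochromatic set, yet you must keep $c'_0 = c_0 \in \initial$ as the first element, and the checkpoint at position $|c_0|$ (needed for condition (3) at $i=0$) is then governed by the colours of pairs $(0,j)$ and triples $(0,j,k)$, which Ramsey applied to the tail does not control; the standard fix is a preliminary pigeonhole step fixing the colours of all pairs and triples whose first index is $0$ before applying Ramsey to the remaining indices (the checkpoint state at $|c_0|$ may then legitimately differ from $p_*$ and $q_*$). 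Second, condition (4) at $i=0$ requires $\vtuple{\alpha_0}{\beta_0\alpha_1} = \vtuple{c'_0}{c'_1}$ to be accepted from the \emph{starting} state of $r'$, so if the transducer for $\potreach$ has several initial states you must additionally record the initial state of each fixed accepting run on $\vtuple{c_j}{c_k}$ in the colouring, so that the K\"onig tree for $r'$ can be rooted at a single initial state from which $\vtuple{c'_0}{c'_1}$ is accepted. Both repairs are routine and leave the architecture of your argument intact.
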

\begin{proof}
One direction is easy: if the four conditions hold, the sequence $(c_i)_{i \in \N_0}$ with $c_i := \beta_0 \cdots \beta_{i-1} \alpha_i$ satisfies (b). For the other direction, we make use of the transitivity of $\potreach$, see the proof of Lemma~4 in~\cite{ToL08}.
\end{proof}

\begin{restatable}{lemma}{abstractinfinitereachnlp}\label{abstractinfinitereachnlp}
Deciding (b) is in \EXPSPACE.
\end{restatable}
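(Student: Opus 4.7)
The plan is to reduce condition~(b), via Lemma~\ref{alphabeta}, to the nonemptiness of a Büchi automaton $B$ of at most doubly-exponential size built from a deterministic transducer for $\potreach$ (obtained by complementing the exponential-size transducer for $\overline{\potreach}$ provided by Theorem~\ref{thm:abs}), so that the standard on-the-fly nonemptiness check runs in exponential space in the input. By Lemma~\ref{alphabeta}, (b) holds iff sequences $(\alpha_i),(\beta_i)$ satisfying clauses~1--4 exist. The automaton $B$ would read an $\omega$-word encoding the $\beta$-sequence as $\beta_0\#\beta_1\#\beta_2\#\cdots$ and, inside each chunk, guess $\alpha_i$ letter-by-letter while enforcing $|\alpha_i|=|\beta_i|$ and $|\alpha_i|>0$ for $i\ge1$.

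Each state of $B$ would carry (i) a state of $A$ tracking the run $r$ of $A$ on $\beta_0\beta_1\cdots$; (ii) a state of the deterministic transducer $T$ for $\potreach$ tracking the run $r'$ on the diagonal $(\beta_0\beta_1\cdots,\beta_0\beta_1\cdots)$; (iii) an auxiliary state of $A$ running a \emph{pending} verification of condition~3 by simulating $A$ from $q_i$ on the currently-guessed $\alpha_{i+1}$, scheduled to close at the end of chunk~$i+1$; and (iv) one or two auxiliary states of $T$ running the pending verification of condition~4, which I would split across two consecutive chunks: a first half reading the same-length convolution $(\alpha_i,\beta_i)$ during chunk~$i$ to reach some intermediate state $s^{\text{mid}}_{i-1}$ of $T$, and a second half reading the padded tail $(\#^{|\alpha_{i+1}|},\alpha_{i+1})$ during chunk~$i+1$ to an accepting state of $T$. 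The initial states of $B$ additionally run an NFA for $\initial$ on the guessed $\alpha_0$ to enforce $\alpha_0\in\initial$, and the Büchi acceptance condition fires at chunk boundaries where all pending verifications for the previous index have just been successfully discharged, guaranteeing that conditions~3 and~4 hold at infinitely many indices.

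Since $A$ has polynomial size, $T$ has at most doubly-exponential size with each state representable in exponentially many bits, and only a bounded number of pending copies are alive at once, the state of $B$ is describable in exponentially many bits; the standard nondeterministic on-the-fly Büchi nonemptiness algorithm then runs in exponential space, placing the problem in \EXPSPACE\ by Savitch's theorem. The fact that the configurations $c_i=\beta_0\cdots\beta_{i-1}\alpha_i$ produced by an accepting run of $B$ are pairwise distinct is automatic from $|\alpha_i|>0$ for $i\ge1$, which forces their lengths to strictly increase. The main obstacle I foresee is the synchronization of the two halves of the condition~4 check across chunk boundaries---the second half at index~$i$ must run concurrently with the first half at index~$i+1$---which I handle by letting $B$ carry two $T$-copies simultaneously, so that the product state space stays within the bounds above.
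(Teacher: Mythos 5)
Your proposal is correct and follows essentially the same route as the paper: both start from Lemma~\ref{alphabeta}, simulate the doubly-exponential deterministic transducer for $\potreach$ via sets of states of the exponential-size transducer for $\overline{\potreach}$ from Theorem~\ref{thm:abs} (so each macro-state fits in exponential space), and reduce the existence of the sequences $(\alpha_i),(\beta_i)$ to detecting a reachable cycle in an exponential-space-describable graph, concluding with $\mathsf{NEXPSPACE}=\EXPSPACE$. The paper packages the cycle search as an explicit graph on pairs of transducer states with chunk-level edges rather than as B\"uchi nonemptiness, which is only a presentational difference; the one detail to tighten is that your automaton should \emph{block} (not merely fail to mark the boundary as accepting) when a pending verification is not discharged, since conditions~3 and~4 of Lemma~\ref{alphabeta} must hold at every index and not just at infinitely many.
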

\begin{proof}
See Appendix~\ref{app:abstractinfinitereachnlp}.
\end{proof}

\begin{theorem}\label{abslivenessEXPSPACE}
\textsc{Abstract Liveness} is \EXPSPACE-complete both for length-preserving and for general RTSs.
\end{theorem}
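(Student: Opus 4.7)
The plan is to combine the preceding lemmas for \EXPSPACE\ membership and to reduce from \textsc{Abstract Safety} for \EXPSPACE-hardness. For the upper bound, Lemma~\ref{lem:aandb} reduces the problem to deciding whether condition~(a) or condition~(b) holds. Lemmas~\ref{abstractinfinitereach} and~\ref{abstractinfinitereachnlp} show both are in \EXPSPACE, so their disjunction is in \EXPSPACE; the length-preserving case is the easier special case (where (b) never holds and thus need not be checked).

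For the lower bound, I would reduce from \textsc{Abstract Safety}, which is \EXPSPACE-complete by Theorem~\ref{thm:abs}.2. Given an instance $(\system, \interpretation, A)$ with $\system = (\initial, \trafun)$ and $L = \Language{A}$, let $\hat\trafun := \trafun \cup \{(c,c) \mid c \in L\}$ and consider the abstract liveness instance $(\hat\system, \interpretation, A)$ with $\hat\system = (\initial, \hat\trafun)$. The self-loop relation on $L$ is recognised by a transducer of size polynomial in $A$ (read the same letter on both tapes and simulate $A$), so $\hat\trafun$ has a polynomial transducer; the construction runs in polynomial time and preserves length-preservation. The main technical step is that $\hat\system$ and $\system$ induce the same potential reachability relation: for any $w \in \Gamma^*$, $\interpretation(w) \circ \hat\trafun = (\interpretation(w) \circ \trafun) \cup (\interpretation(w) \cap L)$, and the second term is trivially contained in $\interpretation(w)$. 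Hence a constraint $w$ is inductive for $\hat\system$ iff it is inductive for $\system$, so the two systems yield the same $\potreach$.

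It remains to verify that $\potreach(\initial) \cap L \neq \emptyset$ iff some potential run of $\hat\system$ visits $L$ infinitely often. ($\Rightarrow$) Given witnesses $c_0 \in \initial$ and $c \in L$ with $(c_0, c) \in \potreach$, the sequence $c_0, c, c, c, \ldots$ is a potential run of $\hat\system$: the first edge lies in $(\potreach \setminus \Id) \cup \hat\trafun$, and every subsequent edge uses the self-loop $(c,c) \in \hat\trafun$ (if $c_0 = c$, one just iterates the self-loop from the start). ($\Leftarrow$) Every edge of a potential run lies in $(\potreach \setminus \Id) \cup \hat\trafun \subseteq \potreach$, using $\trafun \subseteq \reach \subseteq \potreach$ together with the reflexivity of $\potreach$, so visiting $L$ at all already implies $\potreach(\initial) \cap L \neq \emptyset$. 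I expect the main subtlety to be verifying that the self-loops on $L$ neither destroy nor create inductivity of any constraint, which is precisely what makes this otherwise straightforward reduction sound.
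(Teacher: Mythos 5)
Your proposal is correct and follows essentially the same route as the paper: membership by combining Lemmas~\ref{lem:aandb}, \ref{abstractinfinitereach} and \ref{abstractinfinitereachnlp}, and hardness by reducing from \textsc{Abstract Safety} via self-loops on $\Language{A}$, observing that self-loops preserve inductivity of every constraint and hence leave $\potreach$ unchanged. You merely spell out in more detail the steps the paper calls ``an easy reduction.''
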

\begin{proof}
Membership in \EXPSPACE\ follows from Lemmas~\ref{lem:aandb},~\ref{abstractinfinitereach}, and~\ref{abstractinfinitereachnlp}. \EXPSPACE-hardness follows from an easy reduction from the abstract safety problem: One can add self-loops to all configurations in $\Language{A}$ to make sure that (a) holds iff $\potreach(\initial) \cap \Language{A} \neq \emptyset$. Since self-loops do not affect whether constraints are inductive, $\potreach$ does not change. Note that for length-preserving RTSs, abstract infinite reachability is equivalent to (a).
\end{proof}

\begin{restatable}{theorem}{abssuretermination}\label{thm:abssuretermination}
Abstract sure termination is \EXPSPACE-complete both for length-preserving and for general RTSs.
\end{restatable}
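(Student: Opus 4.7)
The plan is to establish $\EXPSPACE$-completeness by proving membership and hardness separately.

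For membership, I will show that abstract sure termination coincides with the complement of \textsc{Abstract Liveness} specialized to $L = \Sigma^*$. The key observation is that, writing $\trafun' := (\potreach\setminus\Id) \cup \trafun$, a configuration $c$ is $\trafun'$-terminating iff $\trafun(c)=\emptyset$ and $\potreach(c)=\{c\}$; in particular every $\trafun'$-terminating configuration lies in $T$, so a potential run is finite iff it reaches $T$. Consequently, abstract sure termination is equivalent to ``no potential run visits $\Sigma^*$ infinitely often''; since $\Sigma^*$ is accepted by a trivial one-state NFA, Theorem~\ref{abslivenessEXPSPACE} gives an $\EXPSPACE$ procedure, and closure of $\EXPSPACE$ under complement completes the argument.

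For $\EXPSPACE$-hardness, I will reduce from the complement of abstract safety, which is $\EXPSPACE$-complete by Theorem~\ref{thm:abs}. Given an instance $(\system,\interpretation,A)$ with $L=\Language{A}$, the plan is to build $\system'$ by adjoining to $\trafun$ a self-loop at every $c\in L$ while leaving $\interpretation$ untouched. As in the hardness argument for Theorem~\ref{abslivenessEXPSPACE}, such self-loops neither affect whether a constraint is inductive nor change $\potreach$. Hence, if some $c\in L$ is potentially reachable from an initial $c_0$, the sequence $c_0, c, c, c, \ldots$ is an infinite potential run of $\system'$, so abstract sure termination fails.

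The hard direction is the converse: if no configuration in $L$ is potentially reachable, we must guarantee that every potential run of $\system'$ terminates, and this is the main obstacle, because $\system$ might already admit infinite potential runs that do not touch $L$. My plan is to arrange the reduction so that every infinite potential run of $\system$ necessarily visits $L$. One route is to leverage the Turing-machine-based hardness construction for abstract safety in~\cite{CzernerEKW24}, whose non-accepting simulations terminate in finitely many steps, so that after adding the self-loops on $L$ the only surviving infinite potential runs are those witnessing potential reachability of $L$. A more self-contained alternative is to pad every configuration with a length-proportional counter suffix that is decremented by every non-self-loop transition and forces termination at zero; here care will be needed to extend $\interpretation$ so that the counter does not introduce spurious inductive constraints or alter $\potreach$ on the original configurations. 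Both variants are compatible with the length-preserving restriction, yielding hardness uniformly in both regimes.
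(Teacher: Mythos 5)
Your membership argument is correct and essentially the paper's: abstract sure termination is the complement of \textsc{Abstract Liveness} with $L=\configurations$, and \EXPSPACE\ is closed under complement.

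The hardness half has a genuine gap, and it is exactly the one you flag yourself. Reducing from (the complement of) abstract safety by adding self-loops on $L$ only gives one direction: if some $c\in L$ is potentially reachable, an infinite potential run exists. For the converse you need that when $\potreach(\initial)\cap L=\emptyset$ there is \emph{no} infinite potential run, but by Lemma~\ref{lem:aandb}(a) an infinite potential run already exists whenever some potentially reachable $c$ satisfies $(c,c)\in\trafun\cup(\potreach\setminus\Id)^2$ --- and the second disjunct is satisfied by \emph{any} pair $c\neq c'$ with $(c,c'),(c',c)\in\potreach$, which is generic for a coarse overapproximation (e.g.\ whenever no inductive constraint separates $c$ from $c'$ in either direction). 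Neither of your proposed fixes resolves this: appealing to the termination of the non-accepting simulations in the abstract-safety hardness construction of~\cite{CzernerEKW24} controls the runs of $\trafun$, not the $\potreach\setminus\Id$ steps of a potential run; and a counter decremented by transitions likewise constrains $\trafun$ but not $\potreach$, unless you supply inductive constraints that force $\potreach$ to respect the counter --- which is precisely the hard part you defer.

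The paper takes a different route: it reduces from length-preserving \textsc{Abstract Liveness} itself (which is \EXPSPACE-hard by Theorem~\ref{abslivenessEXPSPACE}) and spends the entire proof engineering a new interpretation $\interpretation'$ (four transducers $\interpretation_1,\dots,\interpretation_4$ over an enlarged alphabet including pair-letters from $\Sigma\times\Sigma$) whose sole purpose is to pin down $\potreach'$ exactly: pair-configurations $(c,c')$ are potentially reachable from the initial part only if one component lies in $L$, and distinct pair-configurations can only potentially reach their swapped versions. This is what makes the ``no infinite potential run in the negative case'' direction go through, and it is the content your proposal is missing.
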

\begin{proof}
Abstract sure termination is the special case of abstract liveness where $L = \configurations$ and therefore also in \EXPSPACE. The hardness proof is more involved, see Appendix~\ref{app:abssuretermination}.
\end{proof}

\subsection{Almost-sure properties}
\label{subsec:rafalmostsure}
Unlike recurrent reachability, almost-sure recurrent reachability cannot be semi-decided using only an overapproximation of the reachability relation, not even in the length-preserving case. To see why, recall that a run of a length-preserving RTS  visits a regular set $L$ of configurations infinitely often with probability 1 iff $\reach(\initial) \subseteq \overline T \cap \reach^{-1}(L)$ (Lemma~\ref{asinfinitereachlemma}). Unfortunately, given an overapproximation $\potreach \supseteq \reach$, this condition neither implies nor is implied by $\potreach(\initial) \subseteq \overline T \cap \potreach^{-1}(L)$.

This situation changes when $\reach^{-1}(L)$ is an effectively computable regular set of configurations.
Indeed, in this case $\potreach(\initial) \subseteq \overline T \cap \reach^{-1}(L)$ is decidable, and implies $\reach(\initial) \subseteq \overline T \cap \reach^{-1}(L)$. In the rest of this section, we show that this observation leads to positive results for:
\begin{quote}
\textsc{Abstract A.S. Liveness}\\
Given: RTS $\system=(\initial,\trafun)$, interpretation $\interpretation$, NFA $A$ for $L$. \\
Decide: Does a potential run of $\system$ visit $L$ infinitely often almost surely?
\end{quote}


Broadcast protocols~\cite{EmersonN96,EsparzaFM99} and population protocols~\cite{AngluinADFP06} are two models of distributed computation in which an arbitrarily large but fixed number of identical finite-state processes interact. In an action of a broadcast protocol, a process broadcasts a message to all other processes, which change their state according to a transition function. In an action of a population protocol, only two agents interact and change state.
A configuration of a broadcast or population protocol with $n$ processes can be modeled as a word over the finite set of states $Q$ of the processes of length $n$. It is easy to see that the transition functions of both formalisms are regular, and, since the number of processes remains constant throughout the computation, they are instances of length-preserving RTSs.

The \emph{scattered subword} order on configurations is defined by: for every $c, c' \in Q^*$, we have $c \preceq c'$ if $c=q_1 \cdots q_n \in Q^*$ and $c'=w_0q_1w_1q_2 \cdots q_n w_n$ for some words $w_0, \cdots, w_n \in Q^*$. A set $L$ of configurations is upward-closed w.r.t.\ $\preceq$ if $c \in L$ and $c \preceq c'$ implies $c' \in L$. The following well-known theorem follows from theory of well-structured transition systems (WSTS), and the fact that population protocols can be seen as a special case of probabilistic VAS:

\begin{theorem}{(\cite{FinkelS01}, Theorem 3.6, \cite{Rackoff76})}
Let $\system$ be an RTS modeling a broadcast protocol or a population protocol, and let $L$ be an upward-closed set of configurations. Then $\reach^{-1}(L)$ is regular and effectively computable. Further, in the case of population protocols, $\reach^{-1}(L)$ can be computed in exponential space in the size of $\system$ and an NFA recognizing $L$.
\end{theorem}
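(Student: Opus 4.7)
The plan is to invoke the general theory of well-structured transition systems (WSTS). First I would verify that broadcast and population protocols equipped with the scattered subword order $\preceq$ form WSTSs. By Higman's lemma, $\preceq$ is a well-quasi-order on $Q^*$. \emph{Strong compatibility} with $\trafun$ then follows from the local nature of transitions: in a population protocol only two agents are involved in a step, and additional agents inserted anywhere in the word remain passive; in a broadcast protocol every process reacts to the message, but the reaction is determined locally by each process's state, so inserting extra processes again produces a legal step ending in a larger configuration. Hence, whenever $c \trafun c'$ and $c \preceq d$, there exists $d'$ with $d \trafun d'$ and $c' \preceq d'$.

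Next I would show that $\reach^{-1}(L)$ is upward-closed whenever $L$ is: using compatibility inductively, any run from $c$ to some $\ell \in L$ can be mimicked from any $c' \succeq c$ to reach some $\ell' \succeq \ell$, and upward-closure of $L$ gives $\ell' \in L$. Upward-closed subsets of $(Q^*,\preceq)$ are exactly finite unions of sets of the form $Q^* q_1 Q^* q_2 Q^* \cdots Q^* q_n Q^*$, one per minimal element $q_1\cdots q_n$, with finiteness of the minimal antichain guaranteed by the WQO property. Any such finite union is regular, and an NFA for it can be built directly from the list of minimal elements.

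For effective computability I would use the standard Finkel--Schnoebelen backward iteration: set $L_0 := L$ and $L_{i+1} := L_i \cup \trafun^{-1}(L_i)$, each represented by the finite basis of its minimal elements. Strong compatibility ensures each $L_i$ is upward-closed, and the WQO property guarantees that the ascending chain stabilises, at which point $L_k = \reach^{-1}(L)$. The step $L_i \mapsto \trafun^{-1}(L_i)$ is constructive in both models because the transition rules are a finite presentation from which a minimal basis of the new upward-closed set can be enumerated from the minimal basis of $L_i$.

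The main obstacle is the \EXPSPACE\ bound for population protocols. Here I would exploit the fact that a population protocol is syntactically a vector addition system (VAS) in which the total token count is conserved, so the membership question ``$c \in \reach^{-1}(L)$'' is an instance of VAS coverability against a minimal element of $L$. Rackoff's classical analysis yields a doubly exponential bound on the length of minimal covering sequences and, correspondingly, a singly exponential bound on the component values of the minimal elements of $\reach^{-1}(L)$. This allows the backward iteration (or, equivalently, a nondeterministic search for a minimal covering certificate whose size is guessed within this bound) to be performed in space polynomial in $\size{\system}$, the NFA for $L$, and the exponential bound---i.e., in exponential space overall; membership in \EXPSPACE\ then follows from $\mathsf{NEXPSPACE} = \EXPSPACE$.
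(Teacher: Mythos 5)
Your proposal follows exactly the route the paper intends: the paper states this theorem without proof, as a citation of Finkel--Schnoebelen's backward-reachability algorithm for WSTS (Theorem 3.6 of \cite{FinkelS01}) together with Rackoff's bound, and your reconstruction --- Higman's lemma plus strong compatibility, upward-closure of $\reach^{-1}(L)$, regularity via finite minimal bases, termination of the backward iteration by the WQO property, and Rackoff for the population-protocol case --- is precisely that argument spelled out. The only slip is numerical: Rackoff's doubly exponential bound on the length of covering runs yields a \emph{doubly} exponential bound on the component values of the minimal elements of $\reach^{-1}(L)$ (each step changes counters by a bounded amount), hence a \emph{singly} exponential bound on their binary encodings; your \EXPSPACE\ conclusion survives once ``component values'' is read as ``sizes of the binary representations of the minimal basis elements''.
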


Notice that this is the case even though $\reach$ is not regular for either broadcast or population protocols. As a corollary of this theorem and the observations above, we obtain:

\begin{theorem}
\textsc{Abstract A.S. Liveness} is decidable for RTS modeling broadcast or population protocols and upward-closed sets of configurations. Moreover, it is \EXPSPACE-complete for population protocols.
\end{theorem}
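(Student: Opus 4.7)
My plan is to reduce \textsc{Abstract A.S.\ Liveness} to deciding the inclusion $\potreach(\initial) \subseteq \overline{T} \cap \reach^{-1}(L)$, as foreshadowed in the prose preceding the theorem. This is a sound overapproximation of the characterization $\reach(\initial) \subseteq \overline T \cap \reach^{-1}(L)$ of (length-preserving) a.s.\ recurrent reachability given by Lemma~\ref{asinfinitereachlemma}, since $\reach \subseteq \potreach$. For broadcast and population protocols with upward-closed $L$, the cited WSTS theorem produces an effective NFA for $\reach^{-1}(L)$, Theorem~\ref{thm:abs} produces a transducer for $\potreach$ and hence an NFA for $\potreach(\initial)$, and $\overline T = \trafun^{-1}(\configurations)$ is a standard polynomial-size NFA construction. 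Deciding the inclusion then reduces to a routine automata-theoretic check, establishing decidability.

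For the \EXPSPACE\ upper bound on population protocols I would track sizes carefully. Rackoff's argument yields an NFA for $\reach^{-1}(L)$ of at most exponential size in $|\system|$ and $|A|$; Theorem~\ref{thm:abs} bounds the transducer for $\potreach$ by $n_\interpretation^2 \cdot 2^{n_\Delta \cdot n_\interpretation^2}$, also exponential; and $\overline T$ is polynomial. The inclusion can be decided in $\mathsf{NEXPSPACE}$ by guessing a candidate configuration $c$ letter by letter and verifying on-the-fly, via subset constructions on the at-most-exponential NFAs, that $c \in \potreach(\initial)$ and $c \notin \overline T \cap \reach^{-1}(L)$. The upper bound then follows from $\mathsf{NEXPSPACE} = \EXPSPACE$.

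The main obstacle is the matching \EXPSPACE\ lower bound for population protocols. The natural route is to reduce from \EXPSPACE-complete coverability for VAS (Lipton--Rackoff), which translates directly into a population-protocol problem: given a PP with initial configuration $c_0$ and upward-closed target $U$, decide whether $c_0 \stackrel{*}{\to} U$. I would design an interpretation $\interpretation$ whose inductive constraints force $\potreach(\initial)$ to cover the relevant portion of $\configurations$, together with a suitable upward-closed $L$ so that the inclusion $\potreach(\initial) \subseteq \overline T \cap \reach^{-1}(L)$ becomes equivalent to $U$-coverability; adapting the hardness constructions underlying Theorem~\ref{thm:abs}.2 or Theorem~\ref{abslivenessEXPSPACE} to the restricted syntactic form of population protocols (binary rendez-vous, conserved population size) while preserving the upward-closedness of $L$ is the principal technical difficulty.
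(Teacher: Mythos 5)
Your decidability and upper-bound argument is exactly the paper's: reduce to the inclusion $\potreach(\initial) \subseteq \overline T \cap \reach^{-1}(L)$, i.e.\ universality of $\overline{\potreach(\initial)} \cup (\overline T \cap \reach^{-1}(L))$, using the exponential-space NFA for $\overline{\potreach(\initial)}$ from Theorem~\ref{thm:abs}, the effective (exponential) NFA for $\reach^{-1}(L)$ from the WSTS/Rackoff theorem, and the polynomial NFA for $\overline T$. Regarding the \EXPSPACE\ lower bound, which you flag as the main obstacle: the paper's own proof does not carry out an explicit hardness reduction either (it ends with ``the statement follows''), so your sketched reduction from VAS coverability is a plan the paper leaves implicit rather than a divergence from it.
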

\begin{proof}
One needs to decide whether $\potreach(\initial) \subseteq \overline T \cap \reach^{-1}(L)$, i.e.\ whether $\overline{\potreach(\initial)} \cup (\overline T \cap \reach^{-1}(L))$ is universal. An NFA for $\overline{\potreach(\initial)}$ can be computed in exponential space (see Theorem~\ref{thm:abs}), and the statement follows.
\end{proof}

\section{Conclusions}

We have extended the work of To and Libkin on recurrent reachability of RTSs~\cite{ToL08,ToL10} to the verification of sure and almost-sure properties, and applied our results to the setting of regular abstraction frameworks~\cite{CzernerEKW24}. In particular, we have shown that \textsc{Abstract Liveness} has the same complexity as \textsc{Abstract Safety}, and that \textsc{Abstract A.S. Liveness} is decidable for some important models of distributed systems.

\bibliography{main.bib}

\appendix

\section{Proof of Proposition~\ref{asterminationnlp}}
\label{app:asterminationnlp}

\asterminationnlp*
\begin{proof}
Let $M = (Q,\Sigma',\Gamma,\delta,\square,q_0,q_f)$ be a TM. We construct an RTS $\system$ which terminates almost surely iff $M$ loops on the empty input. The configurations of $\system$ are encodings of configurations of $M$ together with a binary counter: For a configuration $c$ of $M$ of length $n$, we encode pairs $(c,k)$ for $k \in \{0,...,2^n-1\}$ as interleaved words in $((Q \cup \Gamma) \{0,1\})^n$, i.e.\ words of the form $c_1 0 c_2 1 c_3 1 \cdots c_n 0$ of length $2n$. For every length, we add a special configuration $s$ which is also interleaved with a number, and we add a configuration $t$ with no successor.

We define $\initial := \{(q_0 \varepsilon, 0)\}$ of every length (an infinite set). $\trafun$ can either increase the counter by 1, or simulate $M$ while resetting the counter to 0. If the counter reaches $2^n-1$, the run can jump to the configuration $(s,0)$ of its length. From a configuration of the form $(s,k)$, $\trafun$ can either increase the counter by 1, jump to $(q_0\varepsilon,0)$ of its length, or jump to $t$. From $(s,2^n-1)$ (i.e.\ $s$ interleaved with ones), the run can jump to any configuration $(c,0)$ with $|c| = n$. Furthermore, every halting configuration and every configuration where the head of $M$ is in the rightmost position can jump to $q_0\varepsilon$ while increasing the configuration's length by 2 (i.e.\ the configuration of $M$ and the counter gain one position each). As a result, every pair $(c',k')$ is reachable from every pair $(c,k)$ with $|c| \leq |c'|$, and $\reach$ is regular.

Observe that if a run is in a configuration $(c,0)$ with $|c| = n$, it must almost surely either reach the configuration $(s,0)$ with $|s| = n$ or reach a halting configuration of $M$ or a configuration of $M$ where the head is in the rightmost position and jump to $(q_0\varepsilon,0)$ of length $n+1$. Therefore a run almost surely either terminates in $t$ or increases the length of its configurations indefinitely, and in the latter case, must visit $(q_0\varepsilon,0)$ (of increasing lengths) infinitely often.

Assume that $M$ loops on $q_0\varepsilon$. If $\system$ starts with a configuration where the loop is not contained in the size of the configuration, the run will increase the length of the configuration until the loop is contained (or terminate, with low probability). From this point on, every time the run visits $(q_0\varepsilon,0)$, it must later visit $(s,0)$ of the same length, and every time it does so, it terminates with a probability of almost $\frac{1}{2}$. Since this will happen arbitrarily often, the run terminates almost surely.

Now assume that $M$ does not loop on $q_0\varepsilon$. Observe that from any configuration $(c,0)$ with $|c| = n$, visiting $s$ before visiting $(\delta(c),0)$ has probability $2^{-2^n}$. Since $M$ does not loop, there are at most $|Q \cup \Gamma|^n$ possible configurations of $M$ the run of $\system$ can visit before it reaches a halting configuration or a configuration where the head of $M$ is in the rightmost position. Hence, from $(q_0\varepsilon,0)$ of length $n$, the probability of visiting $(s,0)$ of length $n$ before visiting $(q_0\varepsilon,0)$ of length $n+1$ is at most
\[ |Q \cup \Gamma|^n 2^{-2^n}, \]
and the probability of reaching any $(s,0)$ at all from $(q_0\varepsilon,0)$ with $|q_0\varepsilon| = n$ is at most
\[ \sum_{k = n}^\infty |Q \cup \Gamma|^k 2^{-2^k}. \]
This sum converges, so there exists an $n \in \N_0$ such that it is less than 1. Since every terminating run visits $(s,0)$ at least once, the probability of a terminating run is also less than 1.
\end{proof}

\section{Proof of Lemma~\ref{lem:closures}}
\label{app:closures}

\closures*
\begin{proof}
Since $\potreach$ is reflexive, transitive, and a superset of $\trafun$, the first statement follows.

For the second statement, let $(c,c'') \in (\trafun')^2$. Then there exists $c' \in \configurations$ such that $(c,c'), (c',c'') \in \trafun'$. If $c = c'$ or $c' = c''$, then $(c,c'') \in \trafun'$. If $c \neq c'$, then $(c,c') \in \potreach \setminus \Id$ (since $\trafun \subseteq \potreach$); the same holds for $c'$ and $c''$. It follows that $(\trafun')^2 = \trafun' \cup (\potreach \setminus \Id)^2$. Now let $(c,c''') \in (\trafun')^3$. Then there exist $c',c'' \in \configurations$ such that $(c,c'), (c',c''), (c'',c''') \in \trafun'$. If any two consecutive configurations are equal, then $(c,c''') \in (\trafun')^2$. If no two consecutive configurations are equal, we have $(c,c'), (c',c''), (c'',c''') \in \potreach \setminus \Id$. Note that $(c,c'') \in \potreach$ by transitivity. If $c = c''$, then $(c,c''') \in \trafun'$; otherwise $(c,c'') \in \potreach \setminus \Id$ and therefore $(c,c''') \in (\potreach \setminus \Id)^2$. Hence $(\trafun')^3 \subseteq (\trafun')^2$, and, by inductivity, the transitive closure of $\trafun'$ is $\trafun' \cup (\trafun')^2 = \trafun' \cup (\potreach \setminus \Id)^2$.
\end{proof}

\section{Proof of Lemma~\ref{abstractinfinitereachnlp}}
\label{app:abstractinfinitereachnlp}

\abstractinfinitereachnlp*
\begin{proof}
We check whether Lemma~\ref{alphabeta} holds. For that, we first use projection and product construction to construct a transducer $T = (Q_T,\Sigma^2,\delta_T,Q_{0T},F_T)$ for $\potreach \cap (\Sigma^* \times L)$. ($T$ is double-exponential, and we will explain later how we get rid of one exponent.) Since $T$ is the product automaton of the transducer for $\potreach$ and the transducer for $\Sigma^* \times A$, which is isomorphic to $A$, the infinite runs from points 3 and 4 of Lemma~\ref{alphabeta} can be combined into one. That is, points 3 and 4 can now be written as the following condition:
\begin{gather*} \text{There exists an infinite run } r \text{ of } T \text{ on } (\beta_0\beta_1\cdots, \beta_0\beta_1\cdots) \text{ such that}\\
\forall i \in \N: (\alpha_i, \beta_i\alpha_{i+1}) \text{ and } (\#,\alpha_i) \text{ are accepted from the state reached by } r\\
\text{on the prefix } (\beta_0\cdots\beta_{i-1}, \beta_0\cdots\beta_{i-1}), \text{ and } (\alpha_0,\beta_0\alpha_1) \in \Relation{T}). \end{gather*}

Let $G$ be a directed graph with the set $(Q_T)^2$ of nodes such that there is an edge from $(q_1,q_2)$ to $(q'_1,q'_2)$ iff there exist $c,d \in \Sigma^*$ with $|c| = |d| > 0$ and $q_f \in F_T$ such that
\[ q_1 \xrightarrow{\big[{\# \atop c}\big]} q_f, \text{ } q_2 \xrightarrow{\big[{c \atop d}\big]} q'_1, \text{ } q_2 \xrightarrow{\big[{d \atop d}\big]} q'_2 \text{ in } T. \]
We show that the conditions of Lemma~\ref{alphabeta} hold iff there exists a cycle in $G$ reachable from a node $(q_{1,0},q_{2,0})$ such that there exist $\alpha_0$, $\beta_0$ with $\alpha_0 \in \initial$, $|\beta_0| = |\alpha_0|$, $q_{1,0} = \delta_T(q_0, (\alpha_0,\beta_0))$ and $q_{2,0} = \delta_T(q_0, (\beta_0,\beta_0))$.

Assume that there exist sequences $(\alpha_i)_{i \in \N_0}$, $(\beta_i)_{i \in \N_0}$ in $\Sigma^*$ such that the conditions of Lemma~\ref{alphabeta} hold. We define the sequence $(q_{1,i},q_{2,i})_{i \in \N_0}$ of nodes of $G$ where $q_{1,i}$ is the state of $T$ reached by $r$ on $(\beta_0\cdots\beta_{i-1}\alpha_i, \beta_0\cdots\beta_{i-1}\beta_i)$ and $q_{2,i}$ the state of $T$ reached by $r$ on $(\beta_0\cdots\beta_i, \beta_0\cdots\beta_i)$. Let $i \in \N_0$. We show that there exists an edge from $(q_{1,i},q_{2,i})$ to $(q_{1,i+1},q_{2,i+1})$. We know that $(\#,\alpha_{i+1})$ is accepted from $q_{1,i}$, $(\alpha_{i+1},\beta_{i+1})$ leads from $q_{2,i}$ to $q_{1,i+1}$ and $(\beta_{i+1},\beta_{i+1})$ leads from $q_{2,i}$ to $q_{2,i+1}$. Therefore we can set $c := \alpha_{i+1}$ and $d := \beta_{i+1}$ in the definition of the edges of $G$ to prove the existence of the edge. Since there are infinitely many pairs $(q_{1,i},q_{2,i})$, we get an infinite path; therefore $G$ must have a cycle reachable from the pair $(q_{1,1},q_{2,1})$. Now observe that $q_{1,1} = \delta_T(q_0, (\alpha_0,\beta_0))$, $q_{2,1} = \delta_T(q_0, (\beta_0,\beta_0))$, $\alpha_0 \in \initial$ and $|\beta_0| = |\alpha_0|$.

For the converse, assume that $G$ has a cycle reachable from some pair $(q_{1,0},q_{2,0})$ with $\alpha_0$, $\beta_0 \in \Sigma^*$ such that $q_{1,0} = \delta_T(q_0, (\alpha_0,\beta_0))$, $q_{2,0} = \delta_T(q_0, (\beta_0,\beta_0))$, $\alpha_0 \in \initial$ and $|\beta_0| = |\alpha_0|$. For $i \in \N$, let $(q_{1,i},q_{2,i})$ be the $i$-th node reached by the infinite path consisting of the path from $(q_{1,0},q_{2,0})$ to the cycle and then going through the cycle forever. By the definition of $G$, there exist $(c_i)_{i \in \N_0}$, $(d_i)_{i \in \N_0}$ such that for all $i \in \N_0: |c_i| = |d_i| > 0$, $\delta_T(q_{1,i},(\#,c_i)) \in F_T$, $\delta_T(q_{2,i},(c_i,d_i)) = q_{1,i+1}$, and $\delta_T(q_{2,i},(d_i,d_i)) = q_{2,i+1}$. Let $\alpha_i := c_{i+1}$ and $\beta_i := d_{i+1}$ for $i \in \N$. The sequences $(\alpha_i)_{i \in \N_0}$ and $(\beta_i)_{i \in \N_0}$ now satisfy Lemma~\ref{alphabeta}.

This characterisation gives us a nondeterministic algorithm to check for the non-repeating case of abstract infinite reachability: One can construct $T$, guess a node $(q_{1,0},q_{2,0})$ of $G$, guess $\alpha_0 \in \initial$ and $\beta_0 \in \Sigma^*$ with $|\beta_0| = |\alpha_0|$ and check that $q_{1,0} = \delta_T(q_0, (\alpha_0,\beta_0))$ and $q_{2,0} = \delta_T(q_0, (\beta_0,\beta_0))$. Then, one can repeatedly guess the next node $(q_{1,i},q_{2,i})$ by guessing $\alpha_i$ and $\beta_i$ until one reaches the same node twice. This algorithm takes double-exponential space as $T$ can be of double-exponential size. However, we can do better by running on the transducer $\overline T$ which recognizes $\Relation{T} = \overline{\potreach} \cup (\Sigma^* \times \overline{\Language{A}})$.

$\overline T = (Q_{\overline T}, \Sigma, \delta_{\overline T}, Q_{0\overline T}, F_{\overline T})$ has exponential size, see Theorem~\ref{thm:abs}. Like in the power set construction for NFAs, sets of states of $\overline T$ correspond to states of $T$. We redefine the graph $G$ accordingly: the set of nodes becomes $(Q_{\overline T})^2$, the states of $T$ in the definition of the edges become sets of states of $\overline T$, the set $F_T$ is replaced by $2^{\overline {F_{\overline T}}}$. Correspondingly, the nondeterministic algorithm first guesses two sets $Q_{1,0}$, $Q_{2,0} \in 2^{Q_{\overline T}}$. All guesses of $\alpha_i$ and $\beta_i$ are done letter by letter to avoid storing words of double-exponential size. For example, the algorithm guesses $\alpha_0$ and $\beta_0$ letter by letter simultaneously and runs all checks on the fly. To check for a cycle in $G$, the algorithm nondeterministically memorizes a visited node $(Q_{1,i}, Q_{2,i})$ and terminates when it visits that node again. Since all nodes and automata have at most exponential size, this algorithm only needs exponential space. Apply now $\textsf{NEXPSPACE} = \EXPSPACE$.
\end{proof}

\section{Proof of Theorem~\ref{thm:abssuretermination}}
\label{app:abssuretermination}
\newcommand{\cons}{\varphi}

\abssuretermination*

We show that abstract sure termination, i.e.\ the version of \textsc{Abstract Liveness} where $L = \configurations$, is \EXPSPACE-hard by a reduction from the length-preserving version of \textsc{Abstract Liveness}.

Let $\system = (\initial, \trafun)$, $\interpretation$, $A$ be an instance of \textsc{Abstract Liveness}; let $\Sigma$, $\Gamma$ be the alphabets of the configurations and constraints, respectively. We construct an RTS $\system' = (\initial', \trafun')$ over an alphabet $\Sigma'$ and an interpretation $\interpretation'$ such that $\system'$ has an infinite potential run iff $\system$ has an infinite potential run visiting $L := \Language{A}$ infinitely often.

First, let $n$ be the number of states of $A$. We artificially change $A$ to a DFA $A'$ by adding numbers to letters in $\Sigma$, i.e.\ changing the alphabet of $A$ from $\Sigma$ to $\Sigma \times [n]$ and adding numbers to labels in $A$ such that no two transitions from the same state have the same label. By adding a trap state, $A'$ becomes a DFA. To avoid cluttered notation, we write $\Sigma$ for $\Sigma \times [n]$, denote words over $\Sigma$ as pairs $(c,x)$, write $c$ for $\{(c,x) \mid x \in [n]\}$, and denote the set of all words over $\Sigma$ by $\configurations$. Let $L' := \Language{A'}$. Note that $L'$ is not (necessarily) equal to $\{(c,x) \mid c \in L, x \in [n]^*, |x|=|c|\}$, only a subset.

Having done that, we define our RTS. Let $\Sigma_1 = \{a_1 \mid a \in \Sigma\}$ be a copy of $\Sigma$. For a configuration $c \in \configurations$, we denote the word which results from replacing every letter of $c$ with its copy from $\Sigma_1$ by $c_1$, and write $\configurations_1$ for $\{c_1 \mid c \in \configurations\}$. Let $\Sigma' := \Sigma_1 \cup \Sigma \cup (\Sigma \times \Sigma)$, $\configurations' := \Sigma'^*$, $\initial' := \initial$ and $\trafun' := \{(c_1,c'_1) \mid (c,c') \in \trafun\} \cup \{((c,c),(c,c)) \mid (c,c) \in \trafun\}$. Intuitively, the part of $\system'$ with configurations over $\Sigma_1$ behaves the same way as $\system$, and there are additional configurations over $\Sigma$ and over $\Sigma \times \Sigma$ which have no incoming or outgoing transitions except self-loops on configurations of the form $(c,c)$. Note that all runs of $\system'$ immediately terminate, but since \emph{abstract} sure termination is analysed, this is irrelevant.

We define $\interpretation'$ as the union of four deterministic transducers $\interpretation_1, \interpretation_2, \interpretation_3, \interpretation_4$. Firstly, we define $\interpretation_1$ and $\interpretation_2$ such that for all constraints $\cons \in \Gamma^*$ of $\system$, we have $\interpretation_1(\cons) = \{c_1 \mid c \in \interpretation(\cons)\} \cup (\interpretation(\cons) \times \configurations)$ and $\interpretation_2(\cons) = \{c_1 \mid c \in \interpretation(\cons)\} \cup (\configurations \times \interpretation(\cons))$.

$\interpretation_3$ adds, for every $(c,x) \in \configurations$, a constraint for the set $\{(c,x)\} \cup (L' \times \configurations) \cup (\configurations \times L')$. Since $L'$ is given by a DFA, one can construct a DFA for $(L' \times \configurations) \cup (\configurations \times L')$, and thus a DFA for $\interpretation_3$, in polynomial time.

Let $\Gamma_x = \{x_{ab} \mid a,b \in \Sigma \times [n]\}$. The DFA $\interpretation_4$ looks like this:
\begin{center}\begin{tikzpicture}[node distance=2.2cm,auto]
\tikzstyle{every state}=[fill={rgb:black,1;white,10}]

\node[state,initial above,initial text=,accepting] at (0,0) (q0) {};
\node[state,accepting] at (3,1) (q1) {};
\node[state,accepting] at (3,-1) (q2) {};

\path[->]
(q0) edge [loop left] node {$\bigg[{x_{aa} \atop {a \atop a}}\bigg]$} ()
edge node {$\bigg[{x_{ab} \atop {a \atop b}}\bigg]$} (q1)
edge [swap] node {$\bigg[{x_{ab} \atop {b \atop a}}\bigg]$} (q2)
(q1) edge [loop right] node {$\bigg[{x_{aa} \atop {a \atop a}}\bigg], \bigg[{x_{ab} \atop {a \atop b}}\bigg]$} ()
(q2) edge [loop right] node {$\bigg[{x_{aa} \atop {a \atop a}}\bigg], \bigg[{x_{ab} \atop {b \atop a}}\bigg]$} ();
\end{tikzpicture}\end{center}
A transition of the form $\bigg[{x_{ab} \atop {a \atop b}}\bigg]$ stands for all transitions of the form $\bigg[{x_{ab} \atop {a \atop b}}\bigg]$ where $a \neq b$. Missing transitions lead to an implicit trap state. Since $|\Gamma_x|$ is quadratic in $|\Sigma \times [n]|$, $\interpretation_4$ is polynomial in the input.

Intuitively, the letter $x_{ab}$ means that at the corresponding position, one of the configurations has an $a$ and the other has a $b$. For $(\varphi,(c,c'))$ to be accepted in the upper state of $\interpretation_4$, $c$ (resp.\ $c'$) must have exactly the letters from the first (resp.\ second) position of the correspoding letters in $\Gamma_x$, and vice versa for acceptance in the bottom state. Hence $\interpretation_4$ generates, for every $c, c' \in \configurations$, a constraint for the set $\{(c,c'), (c',c)\}$.

We can now describe the potential reachability relation $\potreach'$ of $\system'$. First, note that all constraints generated by $\interpretation_3$ are inductive. It follows that for all distinct $c,c' \in \configurations'$, we have $(c,c') \notin \potreach'$ if $c' \in \configurations$, or if $c' \in \configurations_1$ and $c \notin \configurations_1$. Now let $c,c',c'' \in \configurations$. Because of $\interpretation_3$, we have $(c, (c',c'')) \notin \potreach'$ if $c' \notin L'$ and $c'' \notin L'$. If $c' \in L'$ or $c'' \in L'$, then $(c, (c',c'')) \in \potreach'$ is equivalent to $(c,c'), (c,c'') \in \potreach$ (enforced by $\interpretation_1$ and $\interpretation_2$). Next, all sets of the form $\{(c,c'), (c',c)\}$ are also inductive and have constraints generated by $\interpretation_4$. Hence, for every $c,c',d,d' \in \configurations$, we can have $((c,c'),(d,d')) \in \potreach$ only if $\{c,c'\} = \{d,d'\}$.

Now assume that $\system$ has a potential run visiting $L$ infinitely often. By Lemma~\ref{lem:aandb}(a) (recall that $\system$ is length-preserving), there exist $c_0 \in \initial$, $c \in L$ such that $(c_0,c) \in \potreach$ and $(c,c) \in \trafun \cup (\potreach \setminus \Id)^2$. In particular, there exists a number $x$ such that $(c,x) \in L'$, and thus $(c_0, (c,x)) \in \potreach'$. (In the following, we just write $c$ for $(c,x)$.) If $(c,c) \in \trafun$, then $((c,c),(c,c)) \in \trafun'$; if $(c,c) \in (\potreach \setminus \Id)^2$, then there exists $c' \neq c$ such that $(c,c'), (c',c) \in \potreach$, so we have $((c,c'),(c',c)),((c',c),(c,c')) \in \potreach'$, i.e.\ $((c,c'),(c,c')) \in (\potreach' \setminus \Id)^2$. This shows that Lemma~\ref{lem:aandb}(a) holds for $\system'$, proving the existence of an infinite potential run.

For the converse, assume that $\system'$ has an infinite potential run. By Lemma~\ref{lem:aandb}(a), there exist configurations $(c_0,x) \in \initial'$ and $\mathbf{c} \in \configurations'$ such that $((c_0,x),\mathbf{c}) \in \potreach'$ and $(\mathbf{c},\mathbf{c}) \in \trafun' \cup (\potreach' \setminus \Id)^2$. We have already shown that because of $\interpretation_3$, $((c_0,x),\mathbf{c}) \in \potreach'$ implies that $\mathbf{c} = ((c,x),(c',x))$ with $(c,x) \in L'$ or $(c',x) \in L'$, and additionally, $(c_0,c), (c_0,c') \in \potreach$. Let w.l.o.g.\ $(c,x) \in L'$. Then $c \in L$. If $(\mathbf{c},\mathbf{c}) \in \trafun'$, then we have $c' = c$ and $(c,c) \in \trafun$ by definition of $\trafun'$, and thus Lemma~\ref{lem:aandb}(a) holds for $\system$. If $(\mathbf{c},\mathbf{c}) \in (\potreach' \setminus \Id)^2$, then there exists $\mathbf{c'} \in \configurations'$ such that $\mathbf{c'} \neq \mathbf{c}$ and $(\mathbf{c},\mathbf{c'}), (\mathbf{c'},\mathbf{c}) \in \potreach'$. We have already shown that this implies $\mathbf{c'} = ((d,x),(d',x))$ with $\{c,c'\} = \{d,d'\}$. Since $\mathbf{c'} \neq \mathbf{c}$, we must have $c \neq c'$ and $\mathbf{c'} = ((c',x),(c,x))$. Now the constraints generated by $\interpretation_1$ and $\interpretation_2$ imply that $(c,c'), (c',c) \in \potreach$. Hence $(c,c) \in (\potreach \setminus \Id)^2$, and Lemma~\ref{lem:aandb}(a) holds for $\system$, i.e.\ $\system$ has a run visiting $L$ infinitely often.

\end{document}